\documentclass[letter,11pt]{article}

% These two packages are used for generating pages with smaller margins.
% They must be removed when we submit the paper to some journal.
%\usepackage{pdfpages}
%\usepackage{fullpage}
%\usepackage{textcomp}
%%%%%%%%%%%%%%%%%%%%%%%%%%%%%%%%%%%%%%%%%%%%%%%%%%%%%%%%%%%%%%%%%%%%%%
\usepackage{setspace}
\usepackage{subcaption}
\usepackage{enumerate}
\usepackage{amsmath}
\usepackage{amsfonts}
\usepackage{graphicx}
\usepackage{amssymb}
\usepackage{multirow}
\usepackage{geometry}
\usepackage{soul}
\usepackage{colortbl}
\usepackage{wrapfig}
\usepackage{algorithm}
\usepackage{algorithmicx}
\usepackage{algpseudocode}
\usepackage{amsthm}
\usepackage{todonotes}
\usepackage{thmtools} 
\usepackage{thm-restate}

\algtext*{EndWhile}% Remove "end while" text
\algtext*{EndIf}% Remove "end if" text
\algtext*{EndFor}% Remove "end for" text

%\definecolor{mygray}{gray}{.9}

\geometry{left=1in,right=1in,top=1in,bottom=1in}

\newtheorem{lemma}{Lemma}
\newtheorem{lemma*}[lemma]{Lemma*}
\numberwithin{lemma}{section}
\newtheorem{theorem}[lemma]{Theorem}
\newtheorem{theorem*}[lemma]{Theorem*}

\newtheorem{definition}[lemma]{Definition}

\newtheorem{fact}[lemma]{Fact}

\usepackage{wrapfig}
\usepackage{bm}

\usepackage{booktabs}
\usepackage[pdftex, plainpages = false, pdfpagelabels, 
                 %pdfpagelayout = useoutlines,
                 bookmarks=false,
                 bookmarksopen = true,
                 bookmarksnumbered = true,
                 breaklinks = true,
                 linktocpage,
                 pagebackref,
                 colorlinks = true,  % was true
                 linkcolor = blue,
                 urlcolor  = blue,
                 citecolor = red,
                 anchorcolor = green,
                 hyperindex = true,
                 hyperfigures
                 ]{hyperref}

\title{\LARGE Minimum-Membership Geometric Set Cover, Revisited}
\author{\normalsize Sayan Bandyapadhyay\thanks{Portland State University, USA, \texttt{sayan.bandyapadhyay@gmail.com}.}
\and \normalsize William Lochet\thanks{LIRMM, Université de Montpellier, CNRS, Montpellier, France, \texttt{william.lochet@gmail.com}.}
\and \normalsize Saket Saurabh\thanks{Institute of Mathematical Sciences, Chennai, India, \texttt{saket@imsc.res.in}.}
\and \normalsize Jie Xue\thanks{New York University Shanghai, China, \texttt{jiexue@nyu.edu}.}}
\date{}

\begin{document}

%\pagenumbering{gobble}

\maketitle

\begin{abstract}
We revisit a natural variant of the geometric set cover problem, called \textit{minimum-membership geometric set cover} (MMGSC).
In this problem, the input consists of a set $S$ of points and a set $\mathcal{R}$ of geometric objects, and the goal is to find a subset $\mathcal{R}^* \subseteq \mathcal{R}$ to cover all points in $S$ such that the \textit{membership} of $S$ with respect to $\mathcal{R}^*$, denoted by $\mathsf{memb}(S,\mathcal{R}^*)$, is minimized, where $\mathsf{memb}(S,\mathcal{R}^*) = \max_{p \in S} |\{R \in \mathcal{R}^*: p \in R\}|$.
We give the first polynomial-time approximation algorithms for MMGSC in $\mathbb{R}^2$.
Specifically, we achieve the following two main results.
\begin{itemize}
    \item We give the first polynomial-time constant-approximation algorithm for MMGSC with unit squares.
    This answers a question left open since the work of Erlebach and Leeuwen [SODA'08], who gave a constant-approximation algorithm with running time $n^{O(\mathsf{opt})}$ where $\mathsf{opt}$ is the optimum of the problem (i.e., the minimum membership).
    \item We give the first polynomial-time approximation scheme (PTAS) for MMGSC with halfplanes.
    Prior to this work, it was even unknown whether the problem can be approximated with a factor of $o(\log n)$ in polynomial time, while it is well-known that the minimum-size set cover problem with halfplanes can be solved in polynomial time.
\end{itemize}
%Both of the above results extend to a generalized version of MMGSC, in which the point-set to be covered can be different from the point-set whose membership is considered, that is, the input specifies another point-set $S'$ (in addition to $S$ and $\mathcal{R}$) and our goal is to find $\mathcal{R}^* \subseteq \mathcal{R}$ that covers $S$ and minimizes $\mathsf{memb}(S',\mathcal{R}^*)$.
We also consider a problem closely related to MMGSC, called \textit{minimum-ply geometric set cover} (MPGSC), in which the goal is to find $\mathcal{R}^* \subseteq \mathcal{R}$ to cover $S$ such that the \textit{ply} of $\mathcal{R}^*$ is minimized, where the ply is defined as the maximum number of objects in $\mathcal{R}^*$ which have a nonempty common intersection.
Very recently, Durocher et al. gave the first constant-approximation algorithm for MPGSC with unit squares which runs in $O(n^{12})$ time.
We give a significantly simpler constant-approximation algorithm with near-linear running time.
\end{abstract}

%\newpage
%\pagenumbering{arabic}

\section{Introduction}
Geometric set cover is one of the most fundamental problems in computational geometry \cite{agarwal2014near,aronov2009small,bronnimann1994almost,chan2020faster,clarkson2005improved,varadarajan2010weighted}.
In the original setting of the problem, we are given a set $S$ of points and a set $\mathcal{R}$ of geometric objects, and our goal is to cover all the points in $S$ using fewest objects in $\mathcal{R}$.
Motivated by various applications, several variants of the geometric set cover problem have also been studied in literature.
In this paper, we study a natural variant of the geometric set cover problem, called \textit{minimum-membership geometric set cover} (MMGSC). 

%Geometric set cover
%
%This variant of  the problem was introduced by Kuhn 
%
% Motivated by interference reduction in cellular networks, Kuhn et al. [12] introduced Minimum Membership Set Cover (MMSC) as a combinatorial optimization problem

In the MMGSC problem, the input also consists of a set $S$ of points and a set $\mathcal{R}$ of geometric objects.
Similar to the geometric set cover problem our goal is still to cover all the points in $S$ using the objects in $\mathcal{R}$.
However, we do not care about how many geometric objects we use.
Instead, we want to guarantee that any point in $S$ is not ``over covered''.
More precisely, the goal is to find a subset $\mathcal{R}^* \subseteq \mathcal{R}$ to cover all points in $S$ such that the \textit{membership} of $S$ with respect to $\mathcal{R}^*$, denoted by $\mathsf{memb}(S,\mathcal{R}^*)$, is minimized, where $\mathsf{memb}(S,\mathcal{R}^*) = \max_{p \in S} |\{R \in \mathcal{R}^*: p \in R\}|$.

Kuhn et al.~\cite{KuhnRWWZ05}, motivated by applications in cellular networks, had introduced the non-geometric version of the MMGSC problem, say \textit{minimum-membership set cover} (MMSC).
That is, $S$ is an arbitrary universe with $n$ elements and $\mathcal{R}$ is a collection of subsets of $S$. 
%Apart from introducing the MMGSC problem, Kuhn et al.~\cite{}, 
They showed that the MMSC problem admits an $O(\log n)$-approximation algorithm, where $n=|S|$.
Furthermore, they complimented the upper bound result by showing, that unless P$=$NP, the problem cannot be approximated, in polynomial time, by a ratio less than $\ln n$.
Erlebach and van Leeuwen~\cite{ErlebachL08},  in their seminal work on geometric coverage problem, considered the geometric version of MMSC, namely MMGSC,  from the view of approximation algorithms.
They showed  NP-hardness for approximating the problem with ratio less than $2$ on unit disks and unit squares, and gave a $5$-approximation algorithm for unit squares provided that the optimal objective value is bounded by a constant.
More precisely, their algorithm runs in time $n^{O(\mathsf{opt})}$ where $\mathsf{opt}$ is the optimum of the problem (i.e., the minimum membership).
It has remained open that whether MMGSC with unit squares admits a (truly) polynomial-time constant-approximation algorithm.

As our first result, we settle this open question by giving a polynomial-time algorithm for MMGSC with unit squares which achieves a constant approximation ratio.
In fact, our algorithm works for a generalized version of the problem, in which the point set to be covered can be different from the point set whose membership is considered.
\begin{definition}[generalized MMGSC]
In the generalized minimum-membership geometric set cover (MMGSC) problem, the input consists of two sets $S,S'$ of points in $\mathbb{R}^d$ and a set $\mathcal{R}$ of geometric objects in $\mathbb{R}^d$, and the goal is to find a subset $\mathcal{R}^* \subseteq \mathcal{R}$ to cover all points in $S$ such that $\mathsf{memb}(S',\mathcal{R}^*)$ is minimized.
We denote by $\mathsf{opt}(S,S',\mathcal{R})$ the optimum of the problem instance $(S,S',\mathcal{R})$, i.e., $\mathsf{opt}(S,S',\mathcal{R}) = \mathsf{memb}(S',\mathcal{R}^*)$ where $\mathcal{R}^* \subseteq \mathcal{R}$ is an optimal solution.
\end{definition}

\begin{restatable}{theorem}{square} \label{thm-square}
    The generalized MMGSC problem with unit squares admits a polynomial-time constant-approximation algorithm.
\end{restatable}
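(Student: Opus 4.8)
The plan is to reduce the problem to many independent ``local'' instances living inside the cells of a unit grid, solve (or $O(1)$-approximate) each local instance in polynomial time, and glue the local solutions together, paying only an $O(1)$ factor because any point of $S'$ can be contained only in squares coming from $O(1)$ cells.

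\textbf{Step 1 (localization).} Fix a grid $G$ of axis-parallel cells of side length $1$, shifted generically so that no point of $S \cup S'$ lies on a grid line. For a cell $\sigma \in G$, let $\mathcal{R}_\sigma = \{R \in \mathcal{R} : R \cap \sigma \neq \emptyset\}$ and consider the local instance $(S \cap \sigma,\, S' \cap \sigma,\, \mathcal{R}_\sigma)$. Two easy facts drive the reduction. First, $\mathsf{opt}(S\cap\sigma,\, S'\cap\sigma,\, \mathcal{R}_\sigma) \le \mathsf{opt}(S,S',\mathcal{R})$: restricting a global optimal solution to $\mathcal{R}_\sigma$ still covers $S \cap \sigma$ (every square covering a point of $\sigma$ lies in $\mathcal{R}_\sigma$) and cannot increase the membership on $S'\cap\sigma$. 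Second, if $\mathcal{R}^\sigma \subseteq \mathcal{R}_\sigma$ is any feasible solution of the local instance with $\mathsf{memb}(S'\cap\sigma,\, \mathcal{R}^\sigma) \le c$, then $\bigcup_{\sigma \in G} \mathcal{R}^\sigma$ covers all of $S$, and for any $p \in S' \cap \sigma_0$ the only selected squares that can contain $p$ come from cells $\sigma$ lying within an $O(1)$-size neighborhood of $\sigma_0$ in $G$ (a unit square that contains $p \in \sigma_0$ and meets $\sigma$ forces $\sigma$ and $\sigma_0$ to be $O(1)$ grid cells apart). Hence $\mathsf{memb}\!\left(S',\, \bigcup_{\sigma} \mathcal{R}^\sigma\right) = O(c)$, and it suffices to obtain a polynomial-time $O(1)$-approximation for each local instance.

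\textbf{Step 2 (structure inside a cell).} The point is that inside a unit cell the squares of $\mathcal{R}_\sigma$ degenerate. Since $R$ and $\sigma$ are both unit squares, whenever $R \neq \sigma$ and $R \cap \sigma \neq \emptyset$ the $x$-projections of $R$ and $\sigma$ overlap but are not equal, so $R \cap \sigma$ is flush with either the left or the right edge of $\sigma$; likewise for the $y$-projections. Thus each $R \cap \sigma$ is an axis-parallel rectangle anchored at one of the four corners of $\sigma$ (with the degenerate full-edge and full-cell cases handled trivially), and the rectangles anchored at a fixed corner form a dominance family: a bottom-left-anchored rectangle with opposite corner $q$ covers exactly the points of $\sigma$ dominated by $q$, and symmetrically for the other three corners. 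So a local instance is a minimum-membership cover problem in which every object is one of four quadrant-type (staircase) ranges.

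\textbf{Step 3 (solving a local instance) and the main obstacle.} Here I would show that this quadrant-type min-membership problem is solvable, or $O(1)$-approximable, in polynomial time. One route: sorting $S\cap\sigma$ and $S'\cap\sigma$ by coordinates linearizes each of the four dominance families, so one can write a natural LP relaxation whose constraint matrix has a clean interval/nested structure and round it with $O(1)$ loss; an alternative route is to reduce the cell problem (up to an $O(1)$ factor for splitting the four corner types) to a constant number of one-dimensional interval minimum-membership set cover instances and solve those by dynamic programming over the sorted event points, the DP state tracking how many already-chosen intervals cover the current point while maintaining full coverage. Plugging the resulting constant into Step~1 yields the claimed polynomial-time $O(1)$-approximation. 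The genuinely technical part is exactly this Step~3 --- converting the dominance/quadrant structure of a single cell into a correct polynomial-time (approximate) minimum-membership algorithm, whether via the right LP together with a constant integrality-gap bound or via a carefully designed sweep dynamic program; the grid shift, the $O(1)$-neighborhood count, and the degeneracy of squares inside a cell are all routine once the core is in place.
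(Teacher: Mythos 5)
Your Step 1 is stated in a form that is actually false, and it differs from what the paper needs in a crucial way: you restrict the membership set to $S'\cap\sigma$ in each local instance. With that restriction the gluing bound does not follow. A square selected for a cell $\sigma$ may contain a point $p'\in S'$ lying in a \emph{neighboring} cell $\sigma_0$, and the local guarantee $\mathsf{memb}(S'\cap\sigma,\mathcal{R}^\sigma)\le c$ says nothing about how many squares of $\mathcal{R}^\sigma$ pile up on $p'$. For instance, if $S'\cap\sigma=\emptyset$, \emph{every} cover of $S\cap\sigma$ is locally optimal, including one that stacks arbitrarily many squares on $p'$, while the global optimum may be $1$; so the claimed conclusion $\mathsf{memb}(S',\bigcup_\sigma\mathcal{R}^\sigma)=O(c)$ is not implied by local $c$-approximation. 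The paper's analogue (its Lemma on the grid decomposition) keeps the \emph{full} set $S'$ in every cell instance $(S_\Box,S',\mathcal{Q}_\Box)$ precisely to avoid this, and then only uses that at most $9$ cells can contribute squares containing $p'$. This fix is not cosmetic: once $S'$ is global, the per-cell algorithm must control membership at points anywhere in the plane with respect to the \emph{full} unit squares, so your Step 2/3 reduction to clipped quadrant ranges inside the cell no longer captures the quantity being minimized.

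The second, larger issue is that Step 3 --- which you yourself identify as the technical core --- is not actually supplied, and the sketched routes are unsubstantiated. It is not shown that minimum-membership cover by quadrants anchored at a common corner reduces to one-dimensional interval minimum-membership, nor that the natural LP rounds with $O(1)$ loss; and the ``$O(1)$ factor for splitting the four corner types'' is asserted without argument, although an arbitrary assignment of points to a family that happens to cover them can make the per-family optimum much larger than the global optimum. The paper resolves exactly these two points: it solves a per-cell LP relaxation and assigns each point of $S$ to the corner family carrying the largest fractional coverage (losing only a factor $4$), and then proves the key lemma that a \emph{minimum-size} set cover of the assigned points using only \emph{maximal} one-corner squares has membership on $S'$ at most the fractional membership plus $2$; this rests on (i) the zero integrality gap of minimum-size set cover with one-corner squares (proved by induction on the fractional mass) and (ii) the sandwich property that for one-corner squares ordered by their free corners, the intersection of two outer squares is contained in every middle one --- a fact that holds in the whole plane and hence handles points of $S'$ outside the cell. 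Without a substitute for the LP-guided split and for that size-cover-versus-membership lemma, the proposal is a plausible plan but not a proof.
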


Our second result is a polynomial-time approximation scheme (PTAS) for MMGSC with halfplanes.
Prior to this work, it was even unknown whether the problem can be approximated in polynomial time with a factor of $o(\log n)$, while the minimum-size set cover problem with halfplanes is polynomial-time solvable.
Again, our PTAS works for the generalized version.

\begin{restatable}{theorem}{half} \label{thm-half}
    The generalized MMGSC problem with halfplanes admits a PTAS.
\end{restatable}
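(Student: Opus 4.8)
The plan is to combine a linear-programming relaxation with a case analysis on the optimum $m := \mathsf{opt}(S,S',\mathcal{R})$, exploiting that point--halfplane incidence systems are highly structured (which is also why plain halfplane set cover is polynomial-time solvable). Write the natural relaxation: a variable $x_R \in [0,1]$ for each $R \in \mathcal{R}$, covering constraints $\sum_{R \ni p} x_R \ge 1$ for all $p \in S$, and minimize $t$ subject to $\sum_{R \ni q} x_R \le t$ for all $q \in S'$; let $t^\star \le m$ be its value and $x^\star$ an optimal fractional point. The whole difficulty is to round $x^\star$ to an integral cover of $S$ whose membership exceeds $t^\star$ only by a small \emph{additive} constant; a generic covering LP would give only an $O(\log |S'|)$ multiplicative blow-up, so this is where halfplane geometry must enter.

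Consider first the case $m \ge 1/\varepsilon$. Then an additive $O(1)$ error in the membership is already within a $(1+\varepsilon)$ factor, so it suffices to produce in polynomial time a cover of $S$ with membership $\le t^\star + O(1)$. I would split $\mathcal{R}$ by the orientation of its defining line --- lower halfplanes $\{y \le ax+b\}$, upper halfplanes $\{y \ge ax+b\}$, and the degenerate vertical ones --- and argue that within one orientation class the covering constraint matrix is an interval-type (staircase) matrix once the points of $S$ are sorted along the appropriate order, so its covering polytope is (near-)integral and any vertex is fractionally supported by only $O(1)$ halfplanes through any fixed $q \in S'$. Iterative rounding (repeatedly move $x^\star$ to a vertex of the face on which the still-fractional covering constraints remain tight, rounding up integral coordinates as they appear) then inflates each point's membership by only $O(1)$; combining the at most three orientation classes multiplies the additive loss by a constant.

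Consider next the case $m < 1/\varepsilon$, so that $m$ is a constant depending only on $\varepsilon$. A PTAS may spend time $n^{f(1/\varepsilon)}$, so here I would solve the instance essentially exactly by a dynamic program: sort the halfplanes' lines by slope and sweep the $x$-axis; a cover of membership $\le m$ has at most $m$ chosen halfplanes of each orientation active above any point, so the sweep need only remember an $O(m)$-sized active set of chosen halfplanes plus a bounded-complexity description of which points of $S$ seen so far are already covered. This gives running time $n^{O(m)} = n^{O(1/\varepsilon)}$ and, as a bonus, the dynamic program itself determines how $S$ is partitioned between lower- and upper-halfplane coverage, so no guessing of that interface is needed.

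I expect the main obstacle to be the rounding step in the first case: one must actually prove that the halfplane covering polytope is integral (or that each vertex is fractional through only $O(1)$ halfplanes per point of $S'$), and that combining the orientation classes does not compound the membership error beyond a constant. Essentially all of the halfplane geometry --- the staircase structure of minimal covers, bounded union complexity, polynomial-time solvability of plain halfplane cover --- is marshalled for precisely this step. A secondary subtlety is handling the two nondegenerate orientations jointly in the first case (the dynamic program absorbs it for free in the second); the fallback there is to enumerate the polynomially many combinatorial interfaces between the two parts of $S$.
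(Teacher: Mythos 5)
Your high-level skeleton is the same as the paper's: an algorithm with additive $O(1)$ membership error handles the case $\mathsf{opt}\ge 1/\varepsilon$, and an $n^{O(\mathsf{opt})}$ exact algorithm handles $\mathsf{opt}<1/\varepsilon$. But both ingredients are left unproved, and the concrete routes you sketch have real problems. For the additive-error step, the claim that one orientation class of halfplanes yields an interval/staircase covering matrix is false in general: the subset of $S$ contained in a lower halfplane $y\le ax+b$ varies with the slope $a$ and is not an interval in any single fixed ordering of $S$ unless the points are in convex position; moreover the membership constraints at points of $S'$ are upper-bound (packing) rows mixed with covering rows, which destroys the integrality you would need. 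Worse, splitting the fractional solution by orientation does not cost only an additive constant: a point of $S$ may be covered to extent $1/2$ by lower and $1/2$ by upper halfplanes, so to obtain a fractional cover within each class you must scale by a constant, which multiplies the membership rather than adding $O(1)$ --- exactly the loss you cannot afford when $\mathsf{opt}$ is large. The paper avoids LP entirely here and instead proves a combinatorial fact (its Lemma on $1$-stable covers): any minimum-size set cover of $S$ that cannot be improved by swapping a single halfplane so as to strictly shrink the complement region has membership at most $\mathsf{opt}(S,S',\mathcal{H})+2$ at every point of $S'$; such a cover is found by polynomial-time local search. Some argument of this strength (or an actual proof of a membership-preserving rounding) is the missing core of your first case.

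Your second case also has a gap: the assertion that a solution of membership at most $m$ has at most $m$ chosen halfplanes of each orientation ``active above any point'' conflates membership at the points of $S'$ with ply everywhere. The number of chosen halfplanes containing an arbitrary point of the plane (e.g., far below all the bounding lines of chosen lower halfplanes) can be as large as the solution itself, since the membership bound is only enforced at $S'$; so a left-to-right sweep cannot keep an $O(m)$-size active set, and ``a bounded-complexity description of which points of $S$ seen so far are already covered'' is not justified and is where the exponential blow-up would hide. The paper's exact algorithm instead works with the complement region of an irreducible solution: it guesses a point $p$ in that region, characterizes feasible solutions as $\mathcal{H}$-compatible convex polygons with no point of $S$ in the interior and with the property that no $k+1$ \emph{consecutive} edges have halfplanes whose common intersection meets $S'$ (a reduction that itself needs a geometric lemma), and then finds such a polygon as a shortest cycle of total turning angle $2\pi$ in a graph on $(k+1)$-tuples of candidate edges. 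You would need to either adopt such a polygon-based formulation or substantiate your sweep DP with a provably bounded state; as written, neither case of your argument goes through.
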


The generalized version of MMGSC is interesting because it also generalizes another closely related problem studied in the literature, called \textit{minimum-ply geometric set cover} (MPGSC).
The MPGSC problems was introduced by Biedl, Biniaz and Lubiw~\cite{BiedlBL21} as a variant of MMGSC.
They observed that in some applications, e.g. interference reduction in cellular networks, it is desirable to minimize the membership of every point in the plane, not only points of $S$.
Therefore, in MPGSC, the goal is to find $\mathcal{R}^* \subseteq \mathcal{R}$ to covers $S$ such that the \textit{ply} of $\mathcal{R}^*$ is minimized, where the ply is defined as the maximum number of objects in $\mathcal{R}^*$ which have a nonempty common intersection.
Observe that MPGSC is a special case of the generalized MMGSC (by letting $S'$ include a point in every face of the arrangement induced by $\mathcal{R}$).
As such, Theorems~\ref{thm-square} and~\ref{thm-half} both apply to MPGSC.

Prior to our work, Biedl, Biniaz and Lubiw~\cite{BiedlBL21} showed that solving the MPGSC with a set of axis-parallel unit squares is NP-hard, and gave a polynomial-time $2$-approximation algorithm for instances in which the optimum (i.e., the minimum ply) is a constant.
Very recently, Durocher, Keil and Mondal~\cite{durocher2023minimum} gave the first constant-approximation algorithm for MPGSC with unit squares, which runs in $O(n^{12})$ time.
This algorithm does not extend to other related settings, such as similarly sized squares or unit disks.
Our algorithm derived from Theorem~\ref{thm-square} is already much more efficient than the one of \cite{durocher2023minimum} (while also not extend to similarly sized squares or unit disks).
However, we observe that for (only) MPGSC  with unit squares, there exists a very simple constant-approximation algorithm which runs in $\widetilde{O}(n)$ time; here $\widetilde{O}$ hides logarithmic factors.
This simple algorithm directly extends to any similarly sized fat objects for which a constant-approximation solution for \textit{minimum-size set cover} can be computed in polynomial time.
Therefore, we obtain the following result.
\begin{restatable}{theorem}{ply}
    The MPGSC problem with unit (or similarly sized) squares/disks admits constant-approximation algorithms with running time $\widetilde{O}(n)$.
\end{restatable}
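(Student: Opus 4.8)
The plan is to reduce MPGSC with (similarly sized) unit squares or disks to \emph{local} instances of ordinary minimum-size geometric set cover, one per cell of a coarse grid, and then glue the local solutions back together. After scaling, fix a constant $c$ so that every object of $\mathcal{R}$ has diameter in $[1,c]$. Overlay on $\mathbb{R}^2$ an axis-parallel grid $G$ of cell side length $1$. For each cell $C$ with $S\cap C\neq\emptyset$ let $S_C=S\cap C$ and $\mathcal{R}_C=\{R\in\mathcal{R}:R\cap C\neq\emptyset\}$, and compute a constant-factor approximate \emph{minimum-size} set cover $\mathcal{A}_C\subseteq\mathcal{R}_C$ of $S_C$ using the known near-linear-time constant-approximation algorithm for minimum-size set cover with unit (or similarly sized) squares/disks. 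Output $\mathcal{A}=\bigcup_C\mathcal{A}_C$. Feasibility is immediate: the cells partition the plane, so the $S_C$ partition $S$, and each $\mathcal{A}_C$ covers $S_C$.

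\textbf{Key lemma (cell budget).} The heart of the analysis is: if $\mathsf{opt}$ denotes the optimal ply and $\mathcal{R}^*$ is an optimal solution, then for every cell $C$ the number of objects of $\mathcal{R}^*$ meeting $C$ is $O(\mathsf{opt})$. I would prove this by an area/packing argument. Every object meeting $C$ lies inside the block $B_C=C\oplus[-c,c]^2$, which has area $O(1)$. Since the ply of a family equals its maximum point-depth, $\int_{B_C}\mathrm{depth}_{\mathcal{R}^*}(q)\,dq\le \mathsf{opt}\cdot\mathrm{area}(B_C)=O(\mathsf{opt})$; and the left-hand side is at least $\sum_{R\in\mathcal{R}^*,\,R\cap C\neq\emptyset}\mathrm{area}(R)$, because each such $R$ is contained in $B_C$. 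Finally, each object, being fat with diameter at least $1$, contains a disk of radius $\Omega(1)$ and hence has area $\Omega(1)$, so the number of objects of $\mathcal{R}^*$ meeting $C$ is $O(\mathsf{opt})$. As $\mathcal{R}^*$ covers $S_C$ and any object covering a point of $S_C$ meets $C$, this family of $O(\mathsf{opt})$ objects already covers $S_C$; hence $(S_C,\mathcal{R}_C)$ has a minimum-size set cover of size $O(\mathsf{opt})$, and therefore $|\mathcal{A}_C|=O(\mathsf{opt})$.

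\textbf{Ply of the output and running time.} For any point $q\in\mathbb{R}^2$, an object of $\mathcal{A}$ containing $q$ belongs to $\mathcal{A}_C$ for some cell $C$ with $R\cap C\neq\emptyset$ and $q\in R$; since $\mathrm{diam}(R)\le c$, such a cell lies within distance $c$ of $q$, so there are $O(1)$ of them, each contributing at most $|\mathcal{A}_C|=O(\mathsf{opt})$ objects through $q$. Thus $\mathsf{ply}(\mathcal{A})=O(\mathsf{opt})$, which is a constant-factor approximation (recall $\mathsf{opt}\ge 1$ whenever $S\neq\emptyset$). For the running time: bucketing the points and distributing each object to the $O(1)$ cells it meets takes $\widetilde{O}(n)$ time and produces subinstances with $\sum_C(|S_C|+|\mathcal{R}_C|)=O(n)$; running the near-linear minimum-size set cover routine on each subinstance then costs $\widetilde{O}\bigl(\sum_C(|S_C|+|\mathcal{R}_C|)\bigr)=\widetilde{O}(n)$ overall.

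\textbf{Main obstacle.} The only real content is the cell-budget lemma, i.e.\ the clean fact that an optimal-ply solution can place only $O(\mathsf{opt})$ objects near any single grid cell; everything else is routine bookkeeping about fat objects of bounded size. The one external ingredient to pin down is that a near-linear constant-approximation for ordinary minimum-size set cover with unit (or similarly sized) squares and disks is indeed available as a black box; for unit squares this can also be obtained directly, since every unit square meeting a unit cell contains one of the four corners of that cell, which turns the per-cell problem into a constant number of dominance-type covering problems solvable in near-linear time.
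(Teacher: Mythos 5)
Your proposal is correct and follows essentially the same route as the paper: the identical grid decomposition into unit cells, solving a constant-approximate \emph{minimum-size} set cover per cell with the known near-linear black box, and taking the union, exactly as in Lemma~\ref{lem-decomp} and Theorem~\ref{thm-reduction}. The only difference is the accounting in the analysis: you upper-bound the number of optimal-solution objects meeting a cell by $O(\mathsf{opt})$ via a depth-integration/area packing argument, whereas the paper argues from the cell where the output ply is attained and pigeonholes the forced objects onto the four cell corners to lower-bound the optimal ply; both hinge on the same locality-plus-fatness facts and give the same conclusion.
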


%In this paper, we study a common generalization of MMGSC and MPGSC, which we call the generalized 

A common ingredient appearing in all of our results is to establish connections between MMGSC (or MPGSC) and the standard minimum-size geometric set cover.
We show that in certain situations, a minimum-size set cover (satisfying certain conditions) can be a good approximation in terms of MMGSC.
This reveals the underlying relations between different variants of geometric set cover problems, and might be of independent interest.

\paragraph{Other related work.}
Very recently, Mitchell and Pandit \cite{mitchell2021minimum} proved that MMGSC with rectangles intersecting a horizontal line or anchored on two horizontal lines is NP-hard (among other algorithmic and hardness results).

\paragraph{Organization.}
The rest of the paper is organized as follows.
In Section~\ref{sec-unitsq}, we present our result for MMGSC with unit squares.
In Section~\ref{sec-hplane}, we present our result for MMGSC with halfplanes.
The result for MPGSC is given in Section~\ref{sec-MPGSC}.
Finally, in Section~\ref{sec-conclusion}, we conclude the paper and give some open questions for future study.
%Due to the limited space, some (less important) proofs are omitted and can be found in the full version of the paper.

%For our second result we consider the MMGSC problem when the  a set $\mathcal{R}$ of geometric objects is halfplanes and obtain the following result. 

%\begin{theorem}
%\label{intro:MMGSChalfplanes}
%    There exists a PTAS for the MMGSC problem with halfplanes. 
%\end{theorem}

%Theorem~\ref{intro:MMGSChalfplanes} gives the first polynomial-time approximation scheme (PTAS) for MMGSC with halfplanes. Prior to this work, it was even unknown whether the problem can be approximated in polynomial time with a factor of $o(\log n)$, while the minimum-size set cover problem with halfplanes can be solved in polynomial time.

%Specifically, we achieve the following two main results.
%\begin{itemize}
%    \item We give the first polynomial-time constant-approximation algorithm for MMGSC with unit squares.
%    This answers a question left open since the work of Erlebach and Leeuwen [SODA'08], which gave a constant-approximation algorithm with running time $n^{O(\mathsf{opt})}$ where $\mathsf{opt}$ is the optimum of the problem (i.e., the minimum membership).
%    \item We give the first polynomia
%\begin{theorem}
%State the theorem in terms of fat objects
%\end{theorem}

%\subsection{Our Methods.}
%We should also do a commentary on the paper of Minimum Membership Covering and Hitting Joseph S.B. Mitchell a,1, Supantha Pandit b

\section{Constant approximation for unit squares} \label{sec-unitsq}
Let $S,S'$ be two sets of points in $\mathbb{R}^2$ and $\mathcal{Q}$ be a set of (axis-parallel) unit squares.
We want to solve the generalized MMGSC instance $(S,S',\mathcal{Q})$.

\subsection{Restricting $S$ to a grid cell}

First of all, we construct a grid $\varGamma$ consisting of square cells of side-length 1.
For each grid cell $\Box$, we write $S_\Box = S \cap \Box$ and $\mathcal{Q}_\Box = \{Q \in \mathcal{Q}: Q \cap \Box \neq \emptyset\}$.
%We say $\Box$ is nonempty if $S_\Box \neq \emptyset$.
\begin{lemma} \label{lem-grid}
Suppose that, for every $\Box \in \varGamma$, $\mathcal{Q}_\Box^* \subseteq \mathcal{Q}_\Box$ is a $c$-approximation solution of the generalized MMGSC instance $(S_\Box,S',\mathcal{Q}_\Box)$.
Then $\bigcup_{\Box \in \varGamma} \mathcal{Q}_\Box^*$ is an $O(c)$-approximation solution of the instance $(S,S',\mathcal{Q})$.
\end{lemma}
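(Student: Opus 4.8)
The plan is to prove two things about $\mathcal{Q}^{\dagger} := \bigcup_{\Box \in \varGamma} \mathcal{Q}_\Box^{*}$: that it covers $S$, and that $\mathsf{memb}(S',\mathcal{Q}^{\dagger}) = O(c)\cdot\mathsf{opt}(S,S',\mathcal{Q})$. The first is immediate, since the cells of $\varGamma$ cover $\mathbb{R}^2$ (so $S = \bigcup_{\Box} S_\Box$) and each $\mathcal{Q}_\Box^{*} \subseteq \mathcal{Q}_\Box \subseteq \mathcal{Q}$ covers $S_\Box$ by hypothesis. (If $(S,S',\mathcal{Q})$ admits no feasible solution then neither does some $(S_\Box,S',\mathcal{Q}_\Box)$ and the hypothesis is void, so we may assume feasibility throughout.)

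The key sub-step is a comparison of the local optima with the global optimum: $\mathsf{opt}(S_\Box,S',\mathcal{Q}_\Box) \le \mathsf{opt}(S,S',\mathcal{Q})$ for every $\Box \in \varGamma$. To see this, fix an optimal solution $\mathcal{R}^{*}$ of $(S,S',\mathcal{Q})$ and restrict it to $\mathcal{R}^{*} \cap \mathcal{Q}_\Box$. Any square of $\mathcal{R}^{*}$ covering a point of $S_\Box \subseteq \Box$ necessarily intersects $\Box$ and hence lies in $\mathcal{Q}_\Box$; therefore $\mathcal{R}^{*} \cap \mathcal{Q}_\Box$ still covers $S_\Box$, and being a subset of $\mathcal{R}^{*}$ it satisfies $\mathsf{memb}(S',\mathcal{R}^{*}\cap\mathcal{Q}_\Box) \le \mathsf{memb}(S',\mathcal{R}^{*}) = \mathsf{opt}(S,S',\mathcal{Q})$. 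Combined with the assumption that $\mathcal{Q}_\Box^{*}$ is a $c$-approximation of $(S_\Box,S',\mathcal{Q}_\Box)$, this gives
\begin{equation*}
\mathsf{memb}(S',\mathcal{Q}_\Box^{*}) \ \le\ c\cdot\mathsf{opt}(S_\Box,S',\mathcal{Q}_\Box) \ \le\ c\cdot\mathsf{opt}(S,S',\mathcal{Q}) \qquad \text{for all } \Box\in\varGamma.
\end{equation*}

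Finally I would use a locality argument to pass from the per-cell bounds to a bound on $\mathsf{memb}(S',\mathcal{Q}^{\dagger})$. Fix a point $p \in S'$. Every unit square $Q$ with $p\in Q$ is contained in the axis-parallel $2\times2$ square centered at $p$, so $Q$ meets only a constant number of cells of $\varGamma$ (at most $16$); let $\varGamma_p$ be the set of those cells. Since each square of $\mathcal{Q}^{\dagger}$ through $p$ belongs to $\mathcal{Q}_\Box^{*}$ for some $\Box \in \varGamma_p$, and since $p\in S'$,
\begin{equation*}
\bigl|\{Q\in\mathcal{Q}^{\dagger}:p\in Q\}\bigr| \ \le\ \sum_{\Box\in\varGamma_p}\bigl|\{Q\in\mathcal{Q}_\Box^{*}:p\in Q\}\bigr| \ \le\ |\varGamma_p|\cdot\max_{\Box\in\varGamma}\mathsf{memb}(S',\mathcal{Q}_\Box^{*}) \ \le\ 16\,c\cdot\mathsf{opt}(S,S',\mathcal{Q}).
\end{equation*}
Maximizing over $p\in S'$ yields $\mathsf{memb}(S',\mathcal{Q}^{\dagger}) \le 16\,c\cdot\mathsf{opt}(S,S',\mathcal{Q})$, which together with feasibility completes the proof.

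I do not expect a genuine obstacle here; the argument is short. The two points that need a little care are the restriction step (observing that a square covering a point of $\Box$ automatically lies in $\mathcal{Q}_\Box$, so that $\mathcal{R}^{*}\cap\mathcal{Q}_\Box$ remains a cover of $S_\Box$) and the bookkeeping in the locality step, where the per-cell membership bound may be invoked only because the fixed point $p$ lies in $S'$, and where one should confirm that a unit square meets only $O(1)$ unit grid cells.
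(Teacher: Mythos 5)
Your proposal is correct and follows essentially the same argument as the paper: coverage is immediate, the restriction of a global optimum to $\mathcal{Q}_\Box$ gives $\mathsf{opt}(S_\Box,S',\mathcal{Q}_\Box)\le\mathsf{opt}(S,S',\mathcal{Q})$, and a constant-size locality argument bounds the membership at each $p\in S'$ (the paper counts the $3\times 3$ block of cells around the cell of $p$, giving $9c$ instead of your $16c$, but the difference is immaterial for an $O(c)$ bound). Your write-up merely makes explicit the restriction step that the paper dismisses with ``clearly.''
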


\begin{proof}
Note that $\bigcup_{\Box \in \varGamma} \mathcal{Q}_\Box^*$ is a set cover of $S$, because any point $p \in S$ is contained in a grid cell $\Box$ and thus $\mathcal{Q}_\Box^*$ covers $p$.
Then we show that for any point $p' \in S'$, the number of unit squares in $\bigcup_{\Box \in \varGamma} \mathcal{Q}_\Box^*$ containing $p'$ is at most $9c \cdot \mathsf{opt}(S,S',\mathcal{Q})$.
Suppose the grid cell containing $p'$ is $\Box'$.
Observe that a unit square $Q \in \bigcup_{\Box \in \varGamma} \mathcal{Q}_\Box^*$ contains $p'$ only if $Q \in \mathcal{Q}_\Box^*$ for a grid cell $\Box$ that is either $\Box'$ or one of the 8 grid cells around $\Box'$.
For each such cell $\Box$, the number of unit squares in $\mathcal{Q}_\Box^*$ containing $p'$ is at most $c \cdot \mathsf{opt}(S_\Box,S',\mathcal{Q}_\Box)$, since $\mathcal{Q}_\Box^*$ is a $c$-approximation solution of $(S_\Box,S',\mathcal{Q}_\Box)$.
Clearly, $\mathsf{opt}(S_\Box,S',\mathcal{Q}_\Box) \leq \mathsf{opt}(S,S',\mathcal{Q})$.
So there can be at most $9c \cdot \mathsf{opt}(S,S',\mathcal{Q})$ unit squares in $\bigcup_{\Box \in \varGamma} \mathcal{Q}_\Box^*$ containing $p'$, which implies that $\bigcup_{\Box \in \varGamma} \mathcal{Q}_\Box^*$ is a $9c$-approximation solution of $(S,S',\mathcal{Q})$.
\end{proof}

\subsection{Partition the instance using LP} \label{sec-LP}
Based on the previous discussion, we will now assume that $S$ is contained in a grid cell $\Box$ and all unit squares in $\mathcal{Q}$ intersect $\Box$.
Note that the points in $S'$ can be everywhere in the plane.
We shall formulate an LP relaxation of the generalized MMGSC instance $(S,S',\mathcal{Q})$.
To this end, we first introduce the notion of fractional set cover.
A \textit{fractional set cover} of a set $A$ of points is a set $\{x_B\}_{B \in \mathcal{B}}$ of numbers in $[0,1]$ indexed by a collection $\mathcal{B}$ of geometric ranges such that $\sum_{B \in \mathcal{B},a \in B} x_B \geq 1$ for all $a \in A$.
For another set $A'$ of points, we can define the \textit{membership} of $A'$ with respect to this fractional set cover $\{x_B\}_{B \in \mathcal{B}}$ as $\mathsf{memb}(A',\{x_B\}_{B \in \mathcal{B}}) = \max_{a' \in A'} \sum_{B \in \mathcal{B}, a' \in B} x_B$.
The LP relaxation of the instance $(S,S',\mathcal{Q})$ simply asks for a fractional set cover of $S$ using the unit squares in $\mathcal{Q}$ that minimizes the membership of $S'$ with respect to it.
Specifically, for each unit square $Q \in \mathcal{Q}$, we create a variable $x_Q$.
%, which indicates whether $Q$ is included in our solution or not.
In addition, we create another variable $y$, which indicates the upper bound for the membership of $S$ with respect to $\{x_Q\}_{Q \in \mathcal{Q}}$.
%the maximum number of times a point in $S'$ is covered by our solution.
We consider the following linear program.
\smallskip
\begin{equation*}
    \begin{array}{rl}
        & \min \text{ } y  \\[2ex]
        \text{s.t. } & 0 \leq x_Q \leq 1 \text{ for all $Q \in \mathcal{Q}$,} \\[1ex]
        & \sum_{Q \in \mathcal{Q}, p \in Q} x_Q \geq 1 \text{ for all $p \in S$,} \\[1ex]
        & \sum_{Q \in \mathcal{Q}, p' \in Q} x_Q \leq y \text{ for all $p' \in S'$.}
    \end{array}
\end{equation*}

\vspace{0.2cm}
We compute an optimal solution $(\{x_Q^*\}_{Q \in \mathcal{Q}},y^*)$ of the above linear program using a polynomial-time LP solver.
We have the following observation about the solution.
\begin{fact}
$y^* \leq \mathsf{opt}(S,S',\mathcal{Q})$.
\end{fact}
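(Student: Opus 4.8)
The plan is to exhibit a feasible solution of the linear program whose objective value equals $\mathsf{opt}(S,S',\mathcal{Q})$; since $y^*$ is the minimum possible objective value, the inequality follows immediately.

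Concretely, let $\mathcal{R}^* \subseteq \mathcal{Q}$ be an optimal (integral) solution of the generalized MMGSC instance $(S,S',\mathcal{Q})$, so that $\mathcal{R}^*$ covers every point of $S$ and $\mathsf{memb}(S',\mathcal{R}^*) = \mathsf{opt}(S,S',\mathcal{Q})$. Define $\bar{x}_Q = 1$ if $Q \in \mathcal{R}^*$ and $\bar{x}_Q = 0$ otherwise, and set $\bar{y} = \mathsf{opt}(S,S',\mathcal{Q})$. Then $0 \leq \bar{x}_Q \leq 1$ for all $Q$; for every $p \in S$ the quantity $\sum_{Q \in \mathcal{Q}, p \in Q} \bar{x}_Q = |\{Q \in \mathcal{R}^* : p \in Q\}| \geq 1$ because $\mathcal{R}^*$ covers $p$; and for every $p' \in S'$ we have $\sum_{Q \in \mathcal{Q}, p' \in Q} \bar{x}_Q = |\{Q \in \mathcal{R}^* : p' \in Q\}| \leq \mathsf{memb}(S',\mathcal{R}^*) = \bar{y}$. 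Hence $(\{\bar{x}_Q\}_{Q \in \mathcal{Q}}, \bar{y})$ is feasible for the linear program, and therefore $y^* \leq \bar{y} = \mathsf{opt}(S,S',\mathcal{Q})$.

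There is essentially no obstacle here: the statement is just the standard observation that the displayed LP is a relaxation of the integer program defining the restricted MMGSC instance, obtained by rounding the indicator vector of an optimal integral solution. The only point worth a sentence of care is checking each of the three constraint families against the definitions of "covers $S$" and $\mathsf{memb}(S',\cdot)$, which is the routine verification carried out above.
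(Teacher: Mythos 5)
Your proof is correct and follows the paper's argument exactly: both exhibit the indicator vector of an optimal integral solution together with $\bar{y}=\mathsf{opt}(S,S',\mathcal{Q})$ as a feasible LP point and conclude by optimality of $y^*$. Your write-up is in fact slightly more careful than the paper's in spelling out the verification of each constraint family.
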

\begin{proof}
Let $\mathcal{Q}^* \subseteq \mathcal{Q}$ be an optimal solution.
We have $S \subseteq \bigcup_{Q \in \mathcal{Q}^*}$ and $\mathsf{memb}(S',\mathcal{Q}^*) = \mathsf{opt}(S,S',\mathcal{Q})$.
Set $x_Q= 1$ for $Q \in \mathcal{Q}^*$, $X_Q= 0$ for $Q \notin \mathcal{Q}^*$, and $y = \mathsf{memb}(S',\mathcal{Q}^*)$.
These values satisfy the LP constraints.
Therefore, $y^* \leq y = \mathsf{opt}(S,S',\mathcal{R})$.
\end{proof}

Next, we shall partition the instance $(S,S',\mathcal{Q})$ into four sub-instances according to the LP solution $(\{x_Q^*\}_{Q \in \mathcal{Q}},y^*)$.
Recall that all points in $S$ are inside the grid cell $\Box$ and all unit squares in $\mathcal{Q}$ intersect $\Box$.
Let $c_1,c_2,c_3,c_4$ be the four corners of $\Box$.
We can partition $\mathcal{Q}$ into $\mathcal{Q}_1,\mathcal{Q}_2,\mathcal{Q}_3,\mathcal{Q}_4$, where $\mathcal{Q}_i$ consists of the unit squares containing $c_i$ for $i \in \{1,2,3,4\}$.
Also, we partition $S$ into $S_1,S_2,S_3,S_4$ in the following way.
For a point $p \in \mathbb{R}^2$ and $i \in \{1,2,3,4\}$, define $\delta_{p,i}$ as the sum of $x_Q^*$ for all $Q \in \mathcal{Q}_i$ satisfying $p \in Q$.
Then we assign each point $p \in S$ to $S_i$, where $i \in \{1,2,3,4\}$ is the index that maximizes $\delta_{p,i}$.
Observe the following fact.
\begin{fact}
For each $i \in \{1,2,3,4\}$, we have $\sum_{Q \in \mathcal{Q}_i, p \in Q} x_Q^* \geq \frac{1}{4}$ for all $p \in S_i$.
\end{fact}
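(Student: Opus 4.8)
The plan is to combine the LP covering constraint with a one-step averaging argument, so the proof will be short. First I would record the only geometric fact that is needed: every unit square $Q \in \mathcal{Q}$ contains at least one of the four corners $c_1,c_2,c_3,c_4$ of $\Box$. This is immediate for axis-parallel unit squares — writing $\Box = [0,1]^2$ and $Q = [a,a+1]\times[b,b+1]$, the condition $Q \cap \Box \neq \emptyset$ forces $-1 \le a,b \le 1$, so $[a,a+1]$ contains $0$ or $1$ and likewise $[b,b+1]$ contains $0$ or $1$; hence $Q$ contains some corner of $\Box$. In particular the classes $\mathcal{Q}_1,\dots,\mathcal{Q}_4$ together cover $\mathcal{Q}$ (ties can be broken arbitrarily if one wants a genuine partition, but this is not needed below).

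Next, fix $i$ and a point $p \in S_i$. Since $p \in S \subseteq \Box$, any unit square $Q \in \mathcal{Q}$ with $p \in Q$ meets $\Box$ and therefore lies in $\mathcal{Q}_j$ for at least one $j \in \{1,2,3,4\}$. Summing the LP values over the four classes then gives
\[
\sum_{j=1}^4 \delta_{p,j} \;=\; \sum_{j=1}^4 \sum_{Q \in \mathcal{Q}_j,\, p \in Q} x_Q^* \;\ge\; \sum_{Q \in \mathcal{Q},\, p \in Q} x_Q^* \;\ge\; 1,
\]
where the last inequality is the covering constraint of the LP at $p$. By averaging over the four indices, $\max_{j} \delta_{p,j} \ge \tfrac14$. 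Finally, $p$ was placed in $S_i$ precisely because $i$ attains $\max_j \delta_{p,j}$, so $\sum_{Q \in \mathcal{Q}_i,\, p \in Q} x_Q^* = \delta_{p,i} = \max_j \delta_{p,j} \ge \tfrac14$, which is the claim.

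I do not foresee any real obstacle here: the statement is essentially a pigeonhole argument, and the only point requiring any care is that $\mathcal{Q}_1 \cup \dots \cup \mathcal{Q}_4$ really equals $\mathcal{Q}$, which is exactly the elementary corner observation above.
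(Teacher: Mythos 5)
Your proof is correct and follows essentially the same route as the paper: sum the $\delta_{p,j}$ over the four classes, lower-bound the sum by the LP covering constraint at $p$, and use that $p$ was assigned to the maximizing index to conclude $\delta_{p,i} \geq \tfrac14$. Your explicit verification that every unit square meeting $\Box$ contains a corner (so that $\mathcal{Q}_1\cup\dots\cup\mathcal{Q}_4=\mathcal{Q}$) is a point the paper leaves implicit in its definition of the partition, but the argument is otherwise identical.
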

\begin{proof}
We have $\sum_{Q \in \mathcal{Q}_i, p \in Q} x_Q^* = \delta_{p,i}$ and $\sum_{i=1}^4 \delta_{p,i} = \sum_{Q \in \mathcal{Q}, p \in Q} x_Q^* \geq 1$, because of the LP constraints.
Furthermore, $\delta_{p,i} \geq \delta_{p,j}$ for all $j \in \{1,2,3,4\}$, as $p \in S_i$.
Thus, $\delta_{p,i} \geq \frac{1}{4}$.
\end{proof}

We now partition the original instance into $(S_1,S',\mathcal{Q}_1),\dots,(S_4,S',\mathcal{Q}_4)$.
Consider an index $i \in \{1,2,3,4\}$.
If we define $\tilde{x}_Q^* = 4 x_Q^*$ for all $Q \in \mathcal{Q}_i$, then the above fact implies $\sum_{Q \in \mathcal{Q}_i, p \in Q} \tilde{x}_Q^* \geq 1$ for all $p \in S_i$.
In other words, $\{\tilde{x}_Q^*\}_{Q \in \mathcal{Q}_i}$ is a fractional set cover of $S_i$.
Note that $\mathsf{memb}(S',\{\tilde{x}_Q^*\}_{Q \in \mathcal{Q}_i}) \leq 4 y^*$, because 
\begin{equation*}
    4y^* \geq \sum_{Q \in \mathcal{Q}, p' \in Q} 4x_Q^* \geq \sum_{Q \in \mathcal{Q}_i, p' \in Q} \tilde{x}_Q^*
\end{equation*}
for all $p' \in S$, due to the constraints of the LP.
With this observation, it now suffices to compute a solution for each instance $(S_i,S',\mathcal{Q}_i)$ that is a constant-factor approximation even with respect to the fractional solutions.
The union of these solutions is a set cover of $S = \bigcup_{i=1}^4 S_i$, the membership of $S'$ with respect to it is $O(y^*)$.
A nice property of the instances $(S_i,S',\mathcal{Q}_i)$ is that all unit squares in $\mathcal{Q}_i$ contain the same corner $c_i$ of $\Box$.
In the next section, we show how to compute the desired approximation solution for such instances.
%We achieve this goal in the next section.

\subsection{The one-corner case} \label{sec-onecorner}
Now consider an instance $(S,S',\mathcal{Q})$, where all points in $S$ lie in a grid cell $\Box$ and all unit squares contain the same corner (say the bottom-left corner) of $\Box$.
For a point $p \in \mathbb{R}^2$, denote by $x(p)$ and $y(p)$ the $x$-coordinate and $y$-coordinate of $p$, respectively.
Also, for a unit square $Q \in \mathcal{Q}$, denote by $x(Q)$ and $y(Q)$ the $x$-coordinate and $y$-coordinate of the top-right corner of $Q$, respectively.
We make two simple observations.
The first one shows that the integral gap of the minimum-size set cover problem in this setting is equal to 1.
The second one gives a useful geometric property for unit squares containing the same corner of $\Box$.

\begin{fact} \label{fact-nogap}
    Let $S_0 \subseteq S$ be a subset and $\mathcal{Q}_0 \subseteq \mathcal{Q}$ be a minimum-size set cover of $S_0$.
    For any fractional set cover $\{\hat{x}_Q\}_{Q \in \mathcal{Q}}$ of $S_0$, we have $\sum_{Q \in \mathcal{Q}} \hat{x}_Q \geq |\mathcal{Q}_0|$.
\end{fact}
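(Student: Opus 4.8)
The plan is to exploit the special geometry of the one-corner case to reduce minimum-size set cover to a minimum interval-piercing problem, for which the integrality gap is well known to be $1$.

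First I would record the basic geometric property (the ``second observation'' alluded to above): if $Q \in \mathcal{Q}$ contains the bottom-left corner $b$ of $\Box$ and $p \in \Box$, then $p \in Q$ if and only if $x(p) \le x(Q)$ and $y(p) \le y(Q)$; indeed the remaining two containment inequalities $x(p) \ge x(Q)-1$ and $y(p) \ge y(Q)-1$ hold automatically because $x(p) \ge x(b) \ge x(Q)-1$ and $y(p) \ge y(b) \ge y(Q)-1$, using $Q \ni b$ and $p \in \Box$. Thus ``$Q$ covers $p$'' is a dominance relation between the planar points $(x(Q),y(Q))$ and $(x(p),y(p))$. Two consequences follow: (i) replacing any $Q$ in a set cover of $S_0$ by a dominance-maximal square $Q'$ of $\mathcal{Q}$ (one with $x(Q)\le x(Q')$, $y(Q)\le y(Q')$) keeps it a set cover and does not increase its size; (ii) a family of squares covers all of $S_0$ iff it covers the dominance-maximal points of $S_0$. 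Hence I may assume $\mathcal{Q}_0$ consists of maximal squares, and $|\mathcal{Q}_0|$ equals the minimum number of maximal squares needed to cover the maximal points of $S_0$.

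Next I would expose the interval structure. Sort the maximal squares as $Q_1,\dots,Q_k$ by increasing $x(Q_j)$; then $y(Q_1) > \dots > y(Q_k)$. For a maximal point $p$, the set $\{\,j : p \in Q_j\,\} = \{\,j : x(Q_j) \ge x(p)\ \text{and}\ y(Q_j) \ge y(p)\,\}$ is a contiguous interval $I_p \subseteq \{1,\dots,k\}$, since $x(Q_j)$ increases and $y(Q_j)$ decreases in $j$. Thus covering the maximal points of $S_0$ by maximal squares is exactly the problem of piercing the interval family $\{I_p\}$ by a fewest set of indices, so $|\mathcal{Q}_0|$ equals this minimum piercing number. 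By the classical fact that for interval families the minimum piercing number equals the maximum number of pairwise-disjoint intervals, I can choose maximal points $p_1,\dots,p_\ell \in S_0$, with $\ell := |\mathcal{Q}_0|$, whose intervals $I_{p_1},\dots,I_{p_\ell}$ are pairwise disjoint.

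Finally I would conclude. No square of $\mathcal{Q}$ contains two of $p_1,\dots,p_\ell$: any $Q \in \mathcal{Q}$ is dominated by some maximal square $Q_j$, which covers (within $\Box$) everything $Q$ covers, so if $Q$ contained both $p_a$ and $p_b$ then $j \in I_{p_a} \cap I_{p_b}$, contradicting disjointness. Therefore the families $\{Q \in \mathcal{Q} : p_a \in Q\}$ for $a = 1,\dots,\ell$ are pairwise disjoint, and for any fractional set cover $\{\hat{x}_Q\}_{Q \in \mathcal{Q}}$ of $S_0$ we get
\[
\sum_{Q \in \mathcal{Q}} \hat{x}_Q \ \ge\ \sum_{a=1}^{\ell}\ \sum_{Q \in \mathcal{Q},\, p_a \in Q} \hat{x}_Q \ \ge\ \sum_{a=1}^{\ell} 1 \ =\ \ell \ =\ |\mathcal{Q}_0|,
\]
where the second inequality is the covering constraint of $\{\hat{x}_Q\}$ at each $p_a \in S_0$. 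The only genuinely new step is spotting the ``consecutive-squares'' interval structure in the one-corner case; after that the argument is classical, the sole nuisance being possible coordinate ties when defining maximal squares/points, which are harmless (break ties arbitrarily, or work directly with antichains).
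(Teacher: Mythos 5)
Your argument is correct, but it takes a genuinely different route from the paper. The paper proves the fact by induction on $\lfloor \sum_{Q \in \mathcal{Q}} \hat{x}_Q \rfloor$: it takes the rightmost point $p \in S_0$ and the square $Q_p \ni p$ with largest $y(Q_p)$, reroutes all fractional mass on squares containing $p$ onto $Q_p$ (setting $\hat{x}_{Q_p}'=1$ and zeroing the others), checks that this is still a fractional cover with no larger total weight, and recurses on $S_0 \setminus Q_p$ using $\pi(S_0 \setminus Q_p, \mathcal{Q}\setminus\{Q_p\}) \geq \pi(S_0,\mathcal{Q})-1$. You instead produce an explicit dual certificate: using the same southwest-quadrant/dominance structure (squares through the corner restricted to $\Box$ are dominance ranges), you reduce the integral cover to piercing a family of index intervals along the staircase of maximal squares, invoke the classical min-piercing $=$ max-disjoint-intervals duality to extract $\ell=|\mathcal{Q}_0|$ points of $S_0$ such that no square of $\mathcal{Q}$ contains two of them, and then sum the LP covering constraints at these points. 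Both proofs rest on the same geometric observation (which the paper isolates as Fact~\ref{fact-sqgeo} and the ``equivalent to southwest quadrants'' remark); what your route buys is an explicit witness — an independent set of points that certifies the lower bound for \emph{every} fractional cover at once, i.e., an integral optimal dual solution — whereas the paper's induction is fully self-contained and does not need the interval-piercing duality (which you quote as classical; it is, though strictly speaking it is an extra ingredient one would either cite or reprove). The small caveats you flag — coordinate ties among maximal squares/points, nonemptiness of the intervals $I_p$ (guaranteed since a cover exists), and the fact that the selected points lie in $S_0$ so the covering constraints apply — are indeed harmless and handled as you describe.
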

\begin{proof}
For convenience, for a set $A$ of points and a set $\mathcal{B}$ of ranges, we write $\pi(A,\mathcal{B})$ as the minimum size of a set cover of $A$ using the ranges in $\mathcal{B}$.
Now it suffices to show that $\sum_{Q \in \mathcal{Q}} \hat{x}_Q \geq \pi(S_0,\mathcal{Q})$ for any fractional set cover $\{\hat{x}_Q\}_{Q \in \mathcal{Q}}$ of $S_0$.
We apply induction on $\lfloor \sum_{Q \in \mathcal{Q}} \hat{x}_Q \rfloor$.
When $\lfloor \sum_{Q \in \mathcal{Q}} \hat{x}_Q \rfloor = 0$, we must have $S_0 = \emptyset$ and the statement trivially holds.
Suppose the statement holds when $\lfloor \sum_{Q \in \mathcal{Q}} \hat{x}_Q \rfloor \leq k-1$, and consider the case $\lfloor \sum_{Q \in \mathcal{Q}} \hat{x}_Q \rfloor = k$.
Let $p \in S_0$ be the rightmost point in $S_0$, i.e., $x(p) \geq x(q)$ for all $q \in S_0$, and $Q_p \in \mathcal{Q}$ be the unit square containing $p$ that maximizes $y(Q_p)$.
For $Q \in \mathcal{Q}$, we define
\begin{equation*}
    \hat{x}_Q' = \left\{
    \begin{array}{cc}
        \hat{x}_Q & \text{if } p \notin Q, \\
        0 & \text{if } p \in Q \text{ and } Q \neq Q_p, \\
        1 & \text{if } Q = Q_p.
    \end{array}
    \right.
\end{equation*}
Observe that $\sum_{Q \in \mathcal{Q}} \hat{x}_Q' \leq \sum_{Q \in \mathcal{Q}} \hat{x}_Q$.
Indeed, since $\{\hat{x}_Q\}_{Q \in \mathcal{Q}}$ is a fractional set cover of $S_0$, we have $\sum_{Q \in \mathcal{Q},p \in Q} \hat{x}_Q \geq 1 = \sum_{Q \in \mathcal{Q},p \in Q} \hat{x}_Q'$.
Also, $\sum_{Q \in \mathcal{Q}, p \notin Q} \hat{x}_Q' = \sum_{Q \in \mathcal{Q}, p \notin Q} \hat{x}_Q$ by definition.
Thus, $\sum_{Q \in \mathcal{Q}} \hat{x}_Q' \leq \sum_{Q \in \mathcal{Q}} \hat{x}_Q$.
As $\lfloor \sum_{Q \in \mathcal{Q}} \hat{x}_Q \rfloor = k$, we have $\lfloor \sum_{Q \in \mathcal{Q}} \hat{x}_Q' \rfloor \leq k$ and thus $\lfloor \sum_{Q \in \mathcal{Q} \backslash \{Q_p\}} \hat{x}_Q' \rfloor \leq k-1$.
Furthermore, we notice that $\{\hat{x}_Q'\}_{Q \in \mathcal{Q}}$ is also a fractional set cover of $S_0$.
To see this, consider a point $q \in S_0$.
We have $x(q) \leq x(p) \leq x(Q_p)$.
If $q \in Q_p$, then $\sum_{Q \in \mathcal{Q},q \in Q} \hat{x}_Q' \geq 1$.
Assume $q \notin Q_p$, which implies $y(q) > y(Q_p)$ because $x(q) \leq x(Q_p)$.
By the choice of $Q_p$, for all $Q \in \mathcal{Q}$ such that $p \in Q$, we have $y(q) > y(Q)$ and thus $q \notin Q$.
It follows that $\sum_{Q \in \mathcal{Q},q \in Q} \hat{x}_Q' = \sum_{Q \in \mathcal{Q},q \in Q} \hat{x}_Q \geq 1$.
So $\{\hat{x}_Q'\}_{Q \in \mathcal{Q}}$ is a fractional set cover of $S_0$.
Therefore, $\{\hat{x}_Q'\}_{Q \in \mathcal{Q} \backslash \{Q_p\}}$ is a fractional set cover of $S_0 \backslash Q_p$.
Using the fact $\lfloor \sum_{Q \in \mathcal{Q} \backslash \{Q_p\}} \hat{x}_Q' \rfloor \leq k-1$ and our induction hypothesis, we have $\sum_{Q \in \mathcal{Q} \backslash \{Q_p\}} \hat{x}_Q' \geq \pi(S_0 \backslash Q_p,\mathcal{Q} \backslash \{Q_p\})$.
Note that $\pi(S_0 \backslash Q_p,\mathcal{Q} \backslash \{Q_p\}) \geq \pi(S_0,\mathcal{Q}) - 1$, simply because a set cover of $S_0 \backslash Q_p$ together with $Q_p$ forms a set cover of $S_0$.
As a result, $\sum_{Q \in \mathcal{Q} \backslash \{Q_p\}} \hat{x}_Q' \geq \pi(S_0,\mathcal{Q}) - 1$ and hence $\sum_{Q \in \mathcal{Q}} \hat{x}_Q' \geq \pi(S_0,\mathcal{Q})$.
Since $\sum_{Q \in \mathcal{Q}} \hat{x}_Q' \leq \sum_{Q \in \mathcal{Q}} \hat{x}_Q$, we finally have $\sum_{Q \in \mathcal{Q}} \hat{x}_Q \geq \pi(S_0,\mathcal{Q})$, which completes the proof.    
\end{proof}

\begin{fact} \label{fact-sqgeo}
Let $Q^-,Q,Q^+$ be three unit squares all containing the bottom-left corner of $\Box$ which satisfy $x(Q^-) \leq x(Q) \leq x(Q^+)$ and $y(Q^-) \geq y(Q) \geq y(Q^+)$.
Then $Q^- \cap Q^+ \subseteq Q$. 
\end{fact}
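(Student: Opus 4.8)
The plan is to argue by coordinates. Place the bottom-left corner of $\Box$ at the origin. Since every unit square in the statement contains this corner, each such square $Q$ is exactly the axis-parallel unit square $[x(Q)-1,x(Q)] \times [y(Q)-1,y(Q)]$, and moreover the containment of the origin forces $x(Q),y(Q) \in [0,1]$ — though this last fact will not even be needed, it is worth recording so the reader sees why the top-right corner determines the whole square. So the first step is simply to make this normalization explicit and to write down the membership conditions for $Q^-$, $Q$, and $Q^+$ in terms of their top-right coordinates.

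The second step is to take an arbitrary point $(a,b) \in Q^- \cap Q^+$ and verify the four inequalities $x(Q)-1 \le a \le x(Q)$ and $y(Q)-1 \le b \le y(Q)$ that certify $(a,b) \in Q$. The upper bound on $a$ follows from $a \le x(Q^-) \le x(Q)$ (using membership in $Q^-$ and the hypothesis $x(Q^-) \le x(Q)$); the lower bound on $a$ follows from $a \ge x(Q^+)-1 \ge x(Q)-1$ (using membership in $Q^+$ and $x(Q) \le x(Q^+)$). Symmetrically, the reversed ordering $y(Q^-) \ge y(Q) \ge y(Q^+)$ gives $b \le y(Q^+) \le y(Q)$ and $b \ge y(Q^-)-1 \ge y(Q)-1$. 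Combining the four inequalities yields $(a,b) \in Q$, and since $(a,b)$ was arbitrary we conclude $Q^- \cap Q^+ \subseteq Q$.

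There is no real obstacle here; the only point requiring a sentence of care is the reversal of roles between $Q^-$ and $Q^+$ in the $x$- versus $y$-direction (it is $Q^-$ that controls the left/top extents and $Q^+$ that controls the right/bottom extents, because the $x$- and $y$-orderings run in opposite directions by hypothesis), so I would state that crossing of roles clearly rather than leaving it implicit.
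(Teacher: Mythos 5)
Your proof is correct and follows essentially the same coordinate argument as the paper: take an arbitrary point of $Q^- \cap Q^+$, use membership in $Q^-$ for the upper bound on the $x$-coordinate and the lower bound on the $y$-coordinate, use membership in $Q^+$ for the other two bounds, and combine with the hypothesized orderings of $x(\cdot)$ and $y(\cdot)$ to certify membership in $Q$. The explicit normalization and the remark that $x(Q),y(Q)\in[0,1]$ are harmless extras that the paper (like your argument) never actually needs.
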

\begin{proof}
Let $p \in Q^- \cap Q^+$.
The fact $p \in Q^-$ implies $x(p) \leq x(Q^-)$ and $y(Q^-)-1 \leq y(p)$.
So we have $x(p) \leq x(Q)$ and $y(Q)-1 \leq y(p)$.
On the other hand, the fact $p \in Q^+$ implies $x(Q^+)-1 \leq x(p)$ and $y(p) \leq y(Q^+)$.
So we have $x(Q)-1 \leq x(p)$ and $y(p) \leq y(Q)$.
Therefore, $x(Q)-1 \leq x(p) \leq x(Q)$ and $y(Q)-1 \leq y(p) \leq y(Q)$, which implies that $p \in Q$.
\end{proof}

We say a unit square $Q \in \mathcal{Q}$ is \textit{dominated} by another unit square $Q' \in \mathcal{Q}$ if $Q \cap \Box \subseteq Q' \cap \Box$.
A unit square in $\mathcal{Q}$ is \textit{maximal} if it is not dominated by any other unit squares in $\mathcal{Q}$.
We denote by $\mathcal{Q}_\mathsf{max} \subseteq \mathcal{Q}$ the set of maximal unit squares in $\mathcal{Q}$.
The following lemma shows that any minimum-size set cover of $S$ that only uses the unit squares in $\mathcal{Q}_\mathsf{max}$ is also a good approximation for the minimum-membership set cover.

\begin{lemma}
Let $\mathcal{Q}^* \subseteq \mathcal{Q}$ be a minimum-size set cover of $S$ such that $\mathcal{Q}^* \subseteq \mathcal{Q}_\mathsf{max}$, and $\{x_Q\}_{Q \in \mathcal{Q}}$ be a fractional set cover of $S$.
Then $\mathsf{memb}(S',\mathcal{Q}^*) \leq \mathsf{memb}(S',\{x_Q\}_{Q \in \mathcal{Q}}) +2$.
\end{lemma}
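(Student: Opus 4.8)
The plan is to exploit the total order that maximality imposes on unit squares sharing the bottom-left corner $c$ of $\Box$. Every such square $Q$ satisfies $Q \cap \Box = [x(c),x(Q)] \times [y(c),y(Q)]$, an axis-parallel rectangle anchored at $c$, so two distinct maximal squares $Q,Q'$ with $x(Q) < x(Q')$ necessarily have $y(Q) > y(Q')$ (otherwise one dominates the other). Hence I may list $\mathcal{Q}^* = \{T_1,\dots,T_t\}$ with $x(T_1) < \cdots < x(T_t)$ and $y(T_1) > \cdots > y(T_t)$. For any $p' \in S'$, Fact~\ref{fact-sqgeo} forces $\{j : p' \in T_j\}$ to be an interval $\{a,\dots,b\}$ of consecutive integers, so $\mathsf{memb}(S',\mathcal{Q}^*)$ equals the maximum length $N$ of such an interval; fix $p'$ attaining it. If $N \le 2$ there is nothing to prove, so assume $N \ge 3$; it then suffices to show $\mathsf{memb}(S',\{x_Q\}_{Q \in \mathcal{Q}}) \ge N-2$.

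Next I would isolate a sub-instance that the fractional cover is forced to pay for. Let $\mathcal{Q}_0 = \{T_{a+1},\dots,T_{b-1}\}$ be the $N-2$ ``middle'' squares, and let $S_0 \subseteq S$ consist of the points covered by some square of $\mathcal{Q}_0$ but by no square of $\mathcal{Q}^* \setminus \mathcal{Q}_0$. A swap argument shows $\mathcal{Q}_0$ is a minimum-size set cover of $S_0$: if some $\mathcal{Q}_1 \subseteq \mathcal{Q}$ covered $S_0$ with $|\mathcal{Q}_1| < N-2$, then $(\mathcal{Q}^* \setminus \mathcal{Q}_0) \cup \mathcal{Q}_1$ would cover all of $S$ (any point of $S$ not covered by $\mathcal{Q}^* \setminus \mathcal{Q}_0$ lies in $S_0$ by construction) using strictly fewer squares than $\mathcal{Q}^*$, contradicting minimality of $\mathcal{Q}^*$.

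The core of the argument is a localization claim: every $Q \in \mathcal{Q}$ that meets $S_0$ contains $p'$. Given $q \in S_0 \cap Q$, let $\{l,\dots,r\} \subseteq \{a+1,\dots,b-1\}$ be the interval (again by Fact~\ref{fact-sqgeo}) of indices $j$ with $q \in T_j$; then $q \notin T_{l-1}$ and $q \notin T_{r+1}$. Since all squares contain $c$ and $q \in \Box$, a short coordinate check shows the only way these non-containments can hold is $x(q) > x(T_{l-1})$ and $y(q) > y(T_{r+1})$. Then $Q \ni q$ gives $x(Q) \ge x(q) > x(T_{l-1})$, and if additionally $y(Q) > y(T_{l-1})$ then $Q \cap \Box \supsetneq T_{l-1} \cap \Box$, i.e.\ $Q$ dominates $T_{l-1}$ --- impossible since $T_{l-1} \in \mathcal{Q}_\mathsf{max}$; hence $y(Q) \le y(T_{l-1})$. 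Symmetrically, $y(Q) \ge y(q) > y(T_{r+1})$ forces $x(Q) \le x(T_{r+1})$. Now $x(T_{l-1}) \le x(Q) \le x(T_{r+1})$ and $y(T_{l-1}) \ge y(Q) \ge y(T_{r+1})$, so Fact~\ref{fact-sqgeo} yields $T_{l-1} \cap T_{r+1} \subseteq Q$; and since $l-1,r+1 \in \{a,\dots,b\}$ we have $p' \in T_{l-1} \cap T_{r+1} \subseteq Q$.

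To finish, I would restrict $\{x_Q\}$ to the squares meeting $S_0$ (zeroing out the others); this is still a fractional set cover of $S_0$, so Fact~\ref{fact-nogap} applied to $(S_0,\mathcal{Q}_0)$ gives total weight at least $|\mathcal{Q}_0| = N-2$. By the localization claim all of these squares contain $p'$, hence $\mathsf{memb}(S',\{x_Q\}_{Q \in \mathcal{Q}}) \ge \sum_{Q \ni p'} x_Q \ge N-2$, as required. I expect the localization claim to be the main obstacle: besides the coordinate bookkeeping that rules out the ``wrong'' non-containments, the conceptual point is the choice of $S_0$ as the set of privately-covered points, which is exactly what makes both the swap-minimality argument and the geometric sandwiching go through, and is what produces the additive $+2$ rather than the multiplicative loss that a cruder ``witness point'' argument would incur.
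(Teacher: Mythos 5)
Your proof is correct and follows essentially the same route as the paper: order $\mathcal{Q}^*$ by maximality, isolate the middle squares between the extreme squares containing $p'$, show they form a minimum-size cover of a sub-instance via the same swap argument, apply Fact~\ref{fact-nogap}, and use Fact~\ref{fact-sqgeo} to sandwich every contributing square so that it contains $p'$. The only (cosmetic) difference is that you define $S_0$ combinatorially as the privately covered points and localize per point via $T_{l-1},T_{r+1}$, whereas the paper takes $S_0 = S\cap R$ for an explicit rectangle $R$ and sandwiches all squares meeting $R$ between $Q_{i^-}$ and $Q_{i^+}$; the two formulations coincide here.
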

\begin{proof}
Suppose $\mathcal{Q}^* = \{Q_1,\dots,Q_r\}$ where $x(Q_1) \leq \cdots \leq x(Q_r)$.
As $\mathcal{Q}^* \subseteq \mathcal{Q}_\mathsf{max}$, we must have $x(Q_1) < \cdots < x(Q_r)$ and $y(Q_1) > \cdots > y(Q_r)$.
Consider a point $p' \in S'$.
Let $i^- \in [r]$ (resp., $i^+ \in [r]$) be the smallest (resp., largest) index such that $p' \in Q_{i^-}$ (resp., $p' \in Q_{i^+}$).
By Fact~\ref{fact-sqgeo}, we have $p' \in Q_{i^-} \cap Q_{i^+} \subseteq Q_i$ for all $i \in \{i^-,\dots,i^+\}$ and thus $|\{Q \in \mathcal{Q}^*: p' \in Q\}| = i^+ - i^- + 1$.
It suffices to show that $\sum_{Q \in \mathcal{Q}, p' \in Q} x_Q \geq i^+ - i^- - 1$.

%Since $\mathcal{Q}^*$ is a minimum-size set cover of $S$, for each $i \in [r]$, there exists a point $p_i \in S$ such that $p_i \in Q_i$ and $p_i \notin Q_j$ for all $j \in [r] \backslash \{i\}$.
Consider the rectangle $R = (x(Q_{i^-}),x(Q_{i^+})] \times (y(Q_{i^+}),y(Q_{i^-})]$; see Figure~\ref{fig-quad}.
Set $S_0 = S \cap R$ and $\mathcal{Q}_0 = \{Q_{i^-+1},\dots,Q_{i^+-1}\}$.
Observe that $\mathcal{Q}_0$ covers $S_0$, since no unit square in $\mathcal{Q}^* \backslash \mathcal{Q}_0$ contains any point in $S_0$.
We claim that $\mathcal{Q}_0 \subseteq \mathcal{Q}$ is a minimum-size set cover of $S_0$.
Indeed, since $x(Q_{i^-}) < \cdots < x(Q_{i^+})$ and $y(Q_{i^-}) > \cdots > y(Q_{i^+})$, the points in $S \backslash S_0$ are all covered by the unit squares $Q_1,\dots,Q_{i^-}$ and $Q_{i^+},\dots,Q_r$.
If there exists a set cover $\mathcal{Q}_0' \subseteq \mathcal{Q}$ of $S_0$ such that $|\mathcal{Q}_0'| < |\mathcal{Q}_0|$, then $\mathcal{Q}_0'$ together with $Q_1,\dots,Q_{i^-},Q_{i^+},\dots,Q_r$ form a set cover of $S$ whose size is smaller than $\mathcal{Q}^*$, contradicting with the fact that $\mathcal{Q}^*$ is a minimum-size set cover of $S$.
Therefore, $\mathcal{Q}_0 \subseteq \mathcal{Q}$ is a minimum-size set cover of $S_0$.
Now for each $Q \in \mathcal{Q}$, define $\hat{x}_Q = x_Q$ if $Q \cap R \neq \emptyset$ and $\hat{x}_Q = 0$ if $Q \cap R = \emptyset$.
As $\{x_Q\}_{Q \in \mathcal{Q}}$ is a fractional set cover of $S$, for each $p \in S_0$, we have $\sum_{Q \in \mathcal{Q},p \in Q} x_Q \geq 1$, which implies $\sum_{Q \in \mathcal{Q},p \in Q} \hat{x}_Q \geq 1$ because $p \in R$ and thus $\hat{x}_Q = x_Q$ for all $Q \in \mathcal{Q}$ such that $p \in Q$.
So $\{\hat{x}_Q\}_{Q \in \mathcal{Q}}$ is a fractional set cover of $S_0$.
By Fact~\ref{fact-nogap}, we then have
\begin{equation} \label{eq-sumhatx1}
    \sum_{Q \in \mathcal{Q}} \hat{x}_Q \geq |\mathcal{Q}_0| = i^+ - i^- - 1.
\end{equation}

\begin{figure}
    \centering
    \includegraphics[height=4cm]{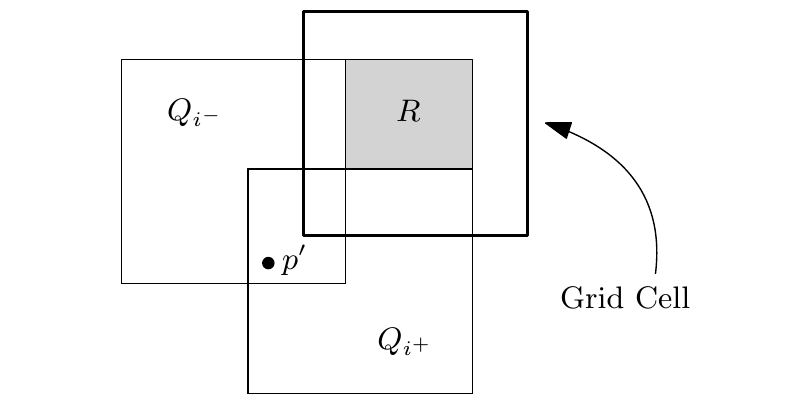}
    \caption{Illustrating the rectangle $R$.}
    \label{fig-quad}
\end{figure}

Next, we observe that $x(Q_{i^-}) \leq x(Q) \leq x(Q_{i^+})$ and $y(Q_{i^-}) \geq y(Q) \geq y(Q_{i^+})$ for any unit square $Q \in \mathcal{Q}$ such that $Q \cap R \neq \emptyset$.
Let $Q \in \mathcal{Q}$ and assume $Q \cap R \neq \emptyset$.
The inequalities $x(Q_{i^-}) \leq x(Q)$ and $y(Q) \geq y(Q_{i^+})$ follow directly from the fact $Q \cap R \neq \emptyset$.
If $x(Q) > x(Q_{i^+})$, then $Q$ dominates $Q_{i^+}$, contradicting the fact $Q_{i^+} \in \mathcal{Q}_\mathsf{max}$.
Similarly, if $y(Q_{i^-}) < y(Q)$, then $Q$ dominates $Q_{i^-}$, contradicting the fact $Q_{i^-} \in \mathcal{Q}_\mathsf{max}$.
Thus, $x(Q_{i^-}) \leq x(Q)$ and $y(Q_{i^-}) \geq y(Q)$.
By Fact~\ref{fact-sqgeo}, we have $p' \in Q_{i^-} \cap Q_{i^+} \subseteq Q$ for all $Q \in \mathcal{Q}$ such that $Q \cap R \neq \emptyset$.
Thus, $\hat{x}_Q = 0$ for all $Q \in \mathcal{Q}$ such that $p' \notin Q$, which implies
%Since $Q \cap R \neq \emptyset$ for all $Q \in \mathcal{Q}$ containing $p'$, it follows that
\begin{equation} \label{eq-sumhatx2}
    \sum_{Q \in \mathcal{Q},p' \in Q} x_Q \geq \sum_{Q \in \mathcal{Q},p' \in Q} \hat{x}_Q = \sum_{Q \in \mathcal{Q}} \hat{x}_Q.
\end{equation}
Combining Equations~\ref{eq-sumhatx1} and~\ref{eq-sumhatx2}, we have $\sum_{Q \in \mathcal{Q},p' \in Q} x_Q \geq i^+-i^--1$.
\end{proof}

Using the above lemma, now it suffices to compute a minimum-size set cover of $S$ using the unit squares in $\mathcal{Q}_\mathsf{max}$.
It is well-known that in this setting, the minimum-size set cover problem can be solved in polynomial time (or even near-linear time) using a greedy algorithm, because the unit squares in $\mathcal{Q}$ are in fact equivalent to southwest quadrants; see for example \cite{agarwal2022dynamic}.
Thus, we can compute in polynomial time a set cover $\mathcal{Q}^* \subseteq \mathcal{Q}$ of $S$ such that $\mathsf{memb}(S',\mathcal{Q}^*) \leq \mathsf{memb}(S',\{x_Q\}_{Q \in \mathcal{Q}}) +2$ for any fractional set cover $\{x_Q\}_{Q \in \mathcal{Q}}$ of $S$.

\subsection{Putting everything together}
Recall that at the end of Section~\ref{sec-LP}, we have four generalized MMGSC instances $(S_1,S',\mathcal{Q}_1),$ $\dots,(S_4,S',\mathcal{Q}_4)$.
Also, for each $i \in \{1,\dots,4\}$, we have a fractional set cover $\{\tilde{x}_Q^*\}_{Q \in \mathcal{Q}_i}$ of $S_i$ such that $\mathsf{memb}(S',\{\tilde{x}_Q^*\}_{Q \in \mathcal{Q}_i}) \leq 4 y^* \leq 4 \cdot \mathsf{opt}(S,S',\mathcal{Q})$.
By the discussion in Section~\ref{sec-onecorner}, for each $i \in \{1,\dots,4\}$, we can compute in polynomial time a set cover $\mathcal{Q}_i^* \subseteq \mathcal{Q}_i$ of $S_i$ satisfying that $\mathsf{memb}(S',\mathcal{Q}_i^*) \leq \mathsf{memb}(S',\{\tilde{x}_Q^*\}_{Q \in \mathcal{Q}_i}) +2$.
Set $\mathcal{Q}^* = \bigcup_{i=1}^4 \mathcal{Q}_i^*$.
As $S = \bigcup_{i=1}^4 S_i$, $\mathcal{Q}^*$ is a set cover of $S$.
Furthermore, we have
\begin{equation*}
    \begin{array}{rl}
        \mathsf{memb}(S',\mathcal{Q}^*) & \leq \ \sum_{i=1}^4 \mathsf{memb}(S',\mathcal{Q}_i^*) \\[1ex]
        & \leq \ \sum_{i=1}^4 \mathsf{memb}(S',\{\tilde{x}_Q^*\}_{Q \in \mathcal{Q}_i}) +8 \\[1ex]
        & \leq \ 16 y^* + 8 \\[1ex]
        & \leq \ 16 \cdot \mathsf{opt}(S,S',\mathcal{Q}) + 8.
    \end{array}
\end{equation*}
If $\mathsf{opt}(S,S',\mathcal{Q}) > 0$, then $\mathcal{Q}^*$ is a constant-approximation solution.
The case $\mathsf{opt}(S,S',\mathcal{Q}) = 0$ can be easily solved by picking all unit squares in $\mathcal{Q}$ that do not contain any points in $S'$.
Therefore, we obtain a constant-approximation algorithm for the case where $S$ is contained in a grid cell.
Further combining this with Lemma~\ref{lem-grid}, we conclude the following.
\square*

\section{Polynomial-time approximation scheme for halfplanes} \label{sec-hplane}
Let $S,S'$ be two sets of points in $\mathbb{R}^2$ and $\mathcal{H}$ be a set of halfplanes.
We want to solve the generalized MMGSC instance $(S,S',\mathcal{H})$.
Set $n = |S|+|S'|+|\mathcal{H}|$.

In order to describe our algorithm, we first need to introduce some basic notions about halfplanes.
The \textit{normal vector} (or \textit{normal} for short) of a halfplane $H$ is the unit vector perpendicular to the bounding line of $H$ whose direction is to the interior of $H$, that is, if the equation of $H$ is $ax + by + c \geq 0$ where $a^2 + b^2 = 1$, then its normal is $\vec{v} = (a,b)$.
For two nonzero vectors $\vec{u}$ and $\vec{v}$ in the plane, we denote by $\mathsf{ang}(\vec{u},\vec{v})$ the \textit{clockwise ordered angle} from $\vec{u}$ to $\vec{v}$, i.e., the angle between $\vec{u}$ and $\vec{v}$ that is to the clockwise of $\vec{u}$ and to the counter-clockwise of $\vec{v}$.
For two halfplanes $H$ and $J$, we write $\mathsf{ang}(H,J) = \mathsf{ang}(\vec{u},\vec{v})$ where $\vec{u}$ (resp., $\vec{v}$) is the normal of $H$ (resp., $J$).
For a set $\mathcal{R}$ of halfplanes, we use $\bigcap \mathcal{R}$ and $\bigcup \mathcal{R}$ to denote the intersection and the union of all halfplanes in $\mathcal{R}$, respectively.
We say a halfplane $H \in \mathcal{R}$ is \textit{redundant} in $\mathcal{R}$ if $\bigcap \mathcal{R} = \bigcap (\mathcal{R} \backslash \{H\})$.
We say $\mathcal{R}$ is \textit{irreducible} if every halfplane in $\mathcal{R}$ is \textit{not} redundant.
The \textit{complement region} of $\mathcal{R}$ refers to the closure of $\mathbb{R}^2 \backslash \bigcup \mathcal{R}$, which is always a convex polygon (possibly unbounded). 
The following simple facts about halfplanes will be used throughout the section.

\begin{fact} \label{fact-2propR}
    Let $\mathcal{R}$ be an irreducible set of halfplanes such that $\bigcup \mathcal{R} \neq \mathbb{R}^2$.
    Then the following two properties hold.
    \begin{enumerate}[(i)]
        \item For any halfplane $H \in \mathcal{R}$ and another halfplane $H'$ different from $H$, we have that $\bigcup \mathcal{R} \neq \bigcup \mathcal{R}'$, where $\mathcal{R}' = (\mathcal{R} \backslash \{H\}) \cup \{H'\}$.
        \item If the halfplanes in $\mathcal{R}$ has a nonempty intersection, i.e., $\bigcap \mathcal{R} \neq \emptyset$, then we can write $\mathcal{R} = \{H_1,\dots,H_t\}$ such that $0 < \mathsf{ang}(H_1,H_2) < \mathsf{ang}(H_1,H_3) < \cdots < \mathsf{ang}(H_1,H_t) < \pi$.
    \end{enumerate}
\end{fact}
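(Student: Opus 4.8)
The plan is to prove the two properties separately; (ii) is short and self-contained, while (i) contains the real work, so I would present (ii) first as a warm-up.

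\emph{Proof of (ii).} Write each $H\in\mathcal{R}$ as $\{x\in\mathbb{R}^2:\vec{v}_H\cdot x\ge c_H\}$, where $\vec{v}_H$ is its normal. Since $\bigcup\mathcal{R}\ne\mathbb{R}^2$, fix a point $z\notin\bigcup\mathcal{R}$, and since $\bigcap\mathcal{R}\ne\emptyset$, fix a point $w\in\bigcap\mathcal{R}$. For every $H\in\mathcal{R}$ we have $\vec{v}_H\cdot w\ge c_H$ (because $w\in H$) and $\vec{v}_H\cdot z<c_H$ (because $z\notin H$), hence $\vec{v}_H\cdot(w-z)>0$. Thus all normals of $\mathcal{R}$ lie in the open halfplane of directions having positive inner product with the fixed nonzero vector $w-z$, so any two of them span a clockwise angle strictly between $0$ and $\pi$. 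Choosing $H_1$ so that every other normal of $\mathcal{R}$ is reached from $\vec{v}_{H_1}$ by a clockwise rotation of angle in $(0,\pi)$ (possible since all normals lie in an open halfplane) and listing $H_2,\dots,H_t$ in clockwise order of their normals yields $0\le\mathsf{ang}(H_1,H_2)\le\cdots\le\mathsf{ang}(H_1,H_t)<\pi$. Finally, irreducibility forbids two halfplanes with a common normal (such halfplanes are nested and the larger one is redundant), so the normals are pairwise distinct and the chain of inequalities is strict.

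\emph{Proof of (i), outline.} Assume for contradiction that $\bigcup\mathcal{R}=\bigcup\mathcal{R}'$ with $\mathcal{R}'=(\mathcal{R}\setminus\{H\})\cup\{H'\}$ and $H'\ne H$. The case $|\mathcal{R}|=1$ is immediate, so assume $|\mathcal{R}|\ge 2$ and put $U=\bigcup(\mathcal{R}\setminus\{H\})$; then $U\cup H=\bigcup\mathcal{R}=\bigcup\mathcal{R}'=U\cup H'$, which forces $H\setminus U\subseteq H'$ and $H'\setminus U\subseteq H$. The key claim is that \emph{$H\not\subseteq U$}. Granting it, observe that $\bigcup\mathcal{R}\ne\mathbb{R}^2$ gives $(\mathbb{R}^2\setminus U)\setminus H\ne\emptyset$; since $\mathbb{R}^2\setminus U$ is an open convex region containing a point in $H$ and a point outside $H$, it meets the bounding line $\partial H$. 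Pick $p\in\partial H\cap(\mathbb{R}^2\setminus U)$ and a small disk $D$ centered at $p$ with $D\cap U=\emptyset$, so $\partial H$ cuts $D$ into an arc inside $H$ and an arc outside. The inside arc lies in $\bigcup\mathcal{R}$ and is disjoint from $U$, hence lies in $H'$ by $H\setminus U\subseteq H'$; the outside arc lies in $\mathbb{R}^2\setminus\bigcup\mathcal{R}=\mathbb{R}^2\setminus\bigcup\mathcal{R}'$, hence avoids $H'$. So $H'\cap D=H\cap D$, forcing $\partial H'=\partial H$ and then, after checking that they include the same side, $H'=H$ — a contradiction. (If $H'\in\mathcal{R}\setminus\{H\}$ then $\mathcal{R}'=\mathcal{R}\setminus\{H\}$ and the same claim already gives $\bigcup\mathcal{R}'=U\subsetneq U\cup H=\bigcup\mathcal{R}$.)

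\emph{The main obstacle} is establishing $H\not\subseteq U$. Irreducibility only guarantees that removing $H$ strictly enlarges the \emph{intersection}, yielding a witness $q\in\bigcap(\mathcal{R}\setminus\{H\})\setminus H$; but $q$ lies in $U$, not in $\mathbb{R}^2\setminus U$, so this alone does not force $H$ to contribute to the union, and it is precisely here that the hypothesis $\bigcup\mathcal{R}\ne\mathbb{R}^2$ must be used. The plan is to fix $z\notin\bigcup\mathcal{R}$ and to examine, along rays running from the region $\bigcap(\mathcal{R}\setminus\{H\})$ toward the complement region of $\mathcal{R}$, an extremal point at which one has just exited $U$: since $z$ lies outside every halfplane of $\mathcal{R}$ while $q$ lies inside every halfplane of $\mathcal{R}\setminus\{H\}$, and the line $\partial H$ separates $q$ from the interior of $H$, such an extremal point ought to land strictly on the $H$-side while still lying outside $U$. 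Turning this into a rigorous argument — in particular controlling the degenerate configurations where some bounding lines are parallel or coincident, and verifying the extremal point is genuinely interior to $H$ — is the technical heart of the proof.
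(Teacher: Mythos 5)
Your proof of part (ii) is correct and is essentially the paper's own argument: the paper also picks $p\in\bigcap\mathcal{R}$ and $q\notin\bigcup\mathcal{R}$, notes every halfplane contains $p$ and misses $q$ (after a rotation this means every halfplane has the form $y\le kx+b$), sorts by slope, and uses irreducibility to exclude nested pairs and make the chain strict; your inner-product formulation is the same argument. One small slip there: with the paper's notion of redundancy, when two halfplanes share a normal it is the \emph{smaller} (contained) one that is redundant, not the larger; the contradiction with irreducibility stands either way.

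The genuine problem is in (i), and it is exactly the step you flag as the ``technical heart.'' In this paper a halfplane $H$ is redundant in $\mathcal{R}$ when removing it does not change $\bigcup\mathcal{R}$ --- redundancy is with respect to the \emph{union}, not the intersection; this is how it is used throughout (e.g.\ in the proof of Fact~\ref{fact-2propH}(i), non-redundancy of $H_1$ is established by exhibiting a point of its boundary line outside $\bigcup_{i=2}^{t}H_i$, and in Fact~\ref{fact-regular} removing redundant halfplanes preserves the complement region and the coverage of $S$). Consequently your key claim $H\not\subseteq U$, with $U=\bigcup(\mathcal{R}\setminus\{H\})$, is \emph{literally} the hypothesis that $H$ is not redundant, i.e.\ it is the definition of irreducibility; there is nothing to prove. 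Under the intersection reading you adopted, no argument could close the gap, because property (i) is then simply false: take $\mathcal{R}$ to consist of the three halfplanes $y\ge 1$, $x\ge 1$, $x+y\ge 10$. Removing any one of them strictly enlarges the intersection, and $\bigcup\mathcal{R}\ne\mathbb{R}^2$, yet $\{x+y\ge 10\}$ is contained in the union of the other two, so replacing it by $x+y\ge 20$ leaves $\bigcup\mathcal{R}$ unchanged. So the ray/extremal-point plan you sketch for the claim cannot succeed. Once the definition is read correctly, the remainder of your argument for (i) --- the disk $D$ centered at a point of $\partial H\cap(\mathbb{R}^2\setminus U)$, forcing $H'\cap D=H\cap D$ and hence $H'=H$ --- is complete and correct, and is a fine elementary alternative to the paper's phrasing, which expresses the same idea by saying that $H$ contributes a segment $\sigma$ to the boundary of the shared complement region and only $H$ itself can supply $\sigma$ (with the correct side) in $\mathcal{R}'$.
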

\begin{proof}
We first prove property (i).
As $\mathcal{R}$ is irreducible and $\bigcup \mathcal{R} \neq \mathbb{R}^2$, $H$ contributes a segment (or a ray) $\sigma$ on the boundary of the complement region of $\mathcal{R}$.
If $\bigcup \mathcal{R} = \bigcup \mathcal{R}'$, then $\sigma$ is also a segment on the boundary of the complement region of $\mathcal{R}'$.
Furthermore, the side of $\sigma$ corresponding to the interior of $\bigcup \mathcal{R}$ coincides the side of $\sigma$ corresponding to the interior of $\bigcup \mathcal{R}'$, since $\mathcal{R}$ and $\mathcal{R}'$ share the same complement region.
Therefore, we must have $H \in \mathcal{R}'$ in order to have $\sigma$ on the boundary of the complement region of $\mathcal{R}'$.
This contradicts the assumption $H' \neq H$.

Then we prove property (ii).
Let $p \in \bigcap \mathcal{R}$ and $q \in \mathbb{R}^2 \backslash (\bigcup \mathcal{R})$.
Without loss of generality, assume the line $\overline{pq}$ is vertical and $p$ is below $q$.
Now every halfplane $H \in \mathcal{R}$ contains $p$ but does not contain $q$.
Thus, the equation of $H$ is $y \leq kx + b$ for some $k,b \in \mathbb{R}^2$.
Sort the halfplanes in $\mathcal{R}$ as $H_1,\dots,H_t$ such that if $y \leq k_i x + b_i$ is the equation of $H_i$, then $k_1 \geq \dots \geq k_t$.
Clearly, $0 \leq \mathsf{ang}(H_1,H_2) \leq \mathsf{ang}(H_1,H_3) \leq \cdots \leq \mathsf{ang}(H_1,H_t) < \pi$.
Now observe that $\mathsf{ang}(H_i,H_{i+1}) > 0$.
Indeed, if $\mathsf{ang}(H_i,H_{i+1}) = 0$, then either $H_i \subseteq H_{i+1}$ or $H_{i+1} \subseteq H_i$, which contradicts the assumption that $\mathcal{R} = \{H_1,\dots,H_t\}$ is irreducible.
As such, we have the inequality $0 < \mathsf{ang}(H_1,H_2) < \mathsf{ang}(H_1,H_3) < \cdots < \mathsf{ang}(H_1,H_t) < \pi$.    
\end{proof}

\begin{fact} \label{fact-2propH}
    Let $H_1,\dots,H_t$ be halfplanes such that 
    \begin{equation*}
        0 < \mathsf{ang}(H_1,H_2) < \mathsf{ang}(H_1,H_3) < \cdots < \mathsf{ang}(H_1,H_t) \leq \pi.
    \end{equation*}
    Then the following two properties hold.
    \begin{enumerate}[(i)]
        \item If $H_1 \cup H_t \neq \mathbb{R}^2$, then neither $H_1$ nor $H_t$ is redundant in $\{H_1,\dots,H_t\}$.
        \item If $\{H_1,\dots,H_t\}$ is irreducible, then $\bigcap_{i=1}^t H_i = H_1 \cap H_t$.
    \end{enumerate}
\end{fact}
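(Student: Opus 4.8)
The plan is to establish the two properties separately: part~(i) by an explicit point construction, and part~(ii) by a ``sandwiching'' argument modeled on Fact~\ref{fact-sqgeo}, with the clockwise angular order of the normals playing the role that the coordinate order plays for squares.

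For (i), write $H_i=\{p:\vec v_i\cdot p\ge c_i\}$ with $\vec v_i$ the unit normal, and put $\theta_i=\mathsf{ang}(H_1,H_i)$, so $0<\theta_2<\dots<\theta_t\le\pi$. To show that $H_1$ is not redundant in $\mathcal R=\{H_1,\dots,H_t\}$ it suffices to exhibit a point of $\bigcap_{i\ge 2}H_i$ outside $H_1$ (the case of $H_t$ being symmetric). I would look for a direction $\vec d$ with $\vec v_i\cdot\vec d>0$ for all $i\ge 2$ and $\vec v_1\cdot\vec d<0$: then $\lambda\vec d$ for large $\lambda$ lies in every $H_i$ with $i\ge 2$ and outside $H_1$. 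Parametrising $\vec d$ by its clockwise angle $\phi$ from $\vec v_1$, the requirements become $\cos\phi<0$, i.e.\ $\phi\in(\tfrac\pi2,\tfrac{3\pi}2)$, and $\cos(\phi-\theta_i)>0$, i.e.\ $\phi\in(\theta_i-\tfrac\pi2,\theta_i+\tfrac\pi2)$, for each $i\ge 2$; since $0<\theta_2\le\theta_i\le\theta_t\le\pi$, the nonempty interval $(\tfrac\pi2,\theta_2+\tfrac\pi2)$ is contained in all of these, so a valid $\phi$ exists. (The hypothesis $H_1\cup H_t\ne\mathbb R^2$ is the natural one if one instead argues through the complement region, but the direction argument already yields the conclusion.)

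For (ii), I would reduce the claim $\bigcap_{i=1}^tH_i=H_1\cap H_t$ to showing that each ``middle'' halfplane absorbs the extremes, i.e.\ $H_1\cap H_t\subseteq H_i$ for every $1<i<t$. Fixing such an $i$, I would work with the three clockwise-ordered halfplanes $H_1,H_i,H_t$ and try to prove an analogue of Fact~\ref{fact-sqgeo}: the bounding line $\ell_i$ of $H_i$ cannot separate the convex region $H_1\cap H_t$. Here irreducibility is what pins down $\ell_i$: non-redundancy of $H_i$ in $\mathcal R$ forces $\ell_i$ to meet the interior of $\bigcap_{j\ne i}H_j\subseteq H_1\cap H_t$, and combined with the fact that $\vec v_i$ points ``between'' $\vec v_1$ and $\vec v_t$, this should force the two rays in which $\ell_i$ exits $H_1\cap H_t$ to lie on $\ell_1$ and on $\ell_t$, so $\ell_i$ leaves $H_1\cap H_t$ undivided. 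Running this over all middle indices finishes the proof.

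The step I expect to be the main obstacle is precisely this geometric ``sandwiching'' in part~(ii). For squares the corresponding statement is easy because a square containing a fixed corner is determined by a single point, so being coordinate-wise between two squares already makes it geometrically between them; for halfplanes the offset $c_i$ is a free parameter not controlled by the angular order, so the argument must genuinely exploit irreducibility to rule out a middle halfplane that is pushed far outward yet still non-redundant because it clips a distant corner of $\bigcap_jH_j$. I also expect the boundary case $\mathsf{ang}(H_1,H_t)=\pi$ to need separate treatment, since then $H_1\cap H_t$ degenerates from a wedge into a slab, a line, or the empty set, and the ``two exit rays'' picture has to be adapted accordingly.
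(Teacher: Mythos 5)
There is a genuine gap, and it comes from the definition of ``redundant.'' In this paper a halfplane $H$ is redundant in $\mathcal{R}$ when removing it does not change the \emph{union}, i.e.\ $H \subseteq \bigcup (\mathcal{R} \backslash \{H\})$ (this is how redundancy/irreducibility is used throughout: the complement region of $\bigcup\mathcal{R}$, the pruning in the regular-solution argument, and the proof of Fact~\ref{fact-2propR}(i) all rely on it). Your proof of (i) instead certifies non-redundancy with respect to the \emph{intersection}: exhibiting a point of $\bigcap_{i\ge 2} H_i$ outside $H_1$ says nothing about whether $H_1 \subseteq \bigcup_{i\ge 2} H_i$, so it does not prove the claim as stated. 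Your parenthetical dismissal of the hypothesis $H_1 \cup H_t \neq \mathbb{R}^2$ is a symptom of the same mismatch: in the paper's proof that hypothesis is exactly what handles the boundary case $\mathsf{ang}(H_1,H_t)=\pi$ (there $H_t$ is a halfplane $y\le c$ with $c<0$, so the line $y=0$ bounding $H_1$ escapes $\bigcup_{i\ge2}H_i$), and a pure direction argument of your type does not go through in that case because the offsets matter.

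The same misreading undermines your plan for (ii). The pivotal step you propose --- ``non-redundancy of $H_i$ forces $\ell_i$ to meet the interior of $\bigcap_{j\neq i} H_j$'' --- is false under the union-based definition (non-redundancy of $H_i$ gives a point of $H_i$ outside the \emph{union} of the others, not a cut through their intersection). Worse, under your intersection-based reading the hypothesis of (ii) would say every middle halfplane strictly shrinks $\bigcap_j H_j$, which directly contradicts the conclusion $\bigcap_{i=1}^t H_i = H_1\cap H_t$ for $t\ge 3$, so the sketch cannot be completed in that form; you yourself flag this ``sandwiching'' as the unresolved obstacle, so (ii) is not actually established. The mechanism the paper uses is different: fixing a middle $H_i$ and normalizing it to $x\ge 0$, if the apex of the wedge $H_1\cap H_t$ had nonpositive $x$-coordinate then $H_i \subseteq H_1\cup H_t$, i.e.\ $H_i$ would be redundant for the union, contradicting irreducibility; hence the apex lies in $x>0$ and $H_1\cap H_t\subseteq H_i$. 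The case $\mathsf{ang}(H_1,H_t)=\pi$ is handled separately (then either $H_1\cap H_t=\emptyset$ or $H_1\cup H_t=\mathbb{R}^2$, the latter forcing $t=2$ by irreducibility). The analogy with Fact~\ref{fact-sqgeo} is a reasonable guide for the conclusion of (ii), but the hypothesis you must exploit is union-irreducibility, not a statement about the intersection.
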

\begin{proof}
We first prove property (i).
Without loss of generality, it suffices to show $H_1$ is not redundant in $\{H_1,\dots,H_t\}$ and we can assume the equation of $H_1$ is $y \geq 0$.
The boundary line $\ell$ of $H_1$ is defined by the equation $y = 0$.
As $0 < \mathsf{ang}(H_1,H_i) < \pi$ for every $i \in \{2,\dots,t-1\}$, $H_i$ contains the $x=+\infty$ end of $\ell$ but does not contain the $x=-\infty$ end of $\ell$.
If $\mathsf{ang}(H_1,H_t) < \pi$, then $H_t$ also does not contain the $x=-\infty$ end of $\ell$, which implies $\ell \nsubseteq \bigcup_{i=2}^t H_i$ and thus $H_1$ is not redundant in $\{H_1,\dots,H_t\}$.
If $\mathsf{ang}(H_1,H_t) = \pi$, then the equation of $H_t$ is $y \leq c$ for some $c \in \mathbb{R}$.
Note that $c < 0$, as $H_1 \cup H_t \neq \mathbb{R}^2$.
Thus, $H_t$ does not intersect $\ell$ and thus $H_1$ is not redundant in $\{H_1,\dots,H_t\}$.

Then we prove property (ii).
It suffices to show that $H_1 \cap H_t \subseteq H_i$ for all $i \in \{2,\dots,t-1\}$.
Without loss of generality, we only need to consider the case $i=2$.
We first notice that $\mathsf{ang}(H_1,H_2) > 0$.
Indeed, if $\mathsf{ang}(H_1,H_2) = 0$, then either $H_1 \subseteq H_2$ or $H_2 \subseteq H_1$, which contradicts the fact that $\{H_1,\dots,H_t\}$ is irreducible.
For the same reason, $\mathsf{ang}(H_2,H_t) > 0$.
Thus, $\mathsf{ang}(H_1,H_2) < \mathsf{ang}(H_1,H_t)$.
%We further observe that $\mathsf{ang}(H_1,H_t) < \pi$.
If $\mathsf{ang}(H_1,H_t) = \pi$, then $H_1$ and $H_t$ are halfplanes with opposite direction, which implies either $H_1 \cap H_t = \emptyset$ or $H_1 \cup H_t = \mathbb{R}^2$.
In the former case, we directly have $H_1 \cap H_t \subseteq H_2$.
In the latter case, we have $t = 2$ (and thus $H_1 \cap H_t \subseteq H_2$ trivially holds), for otherwise $H_2,\dots,H_{t-1}$ are all redundant in $\{H_1,\dots,H_t\}$, contradicting the assumption that $\{H_1,\dots,H_t\}$ is irreducible.
So it suffice to consider the case $\mathsf{ang}(H_1,H_t) < \pi$.
Now $\mathsf{ang}(H_1,H_2) < \mathsf{ang}(H_1,H_t) < \pi$.
Without loss of generality, assume the equation of $H_2$ is $x \geq 0$.
Then the equation of $H_1$ is $y \geq kx+b$ for some $k,b \in \mathbb{R}$ and the equation of $H_t$ is $y \leq k'x+b'$ for some $k',b' \in \mathbb{R}$.
Let $p$ be the intersection of the lines $y = kx+b$ and $y = k'x+b'$.
Note that $H_1 \cup H_t$ covers all points with $x$-coordinates greater than or equal to $p$.
It follows that the $x$ coordinate of $p$ is positive, for otherwise $H_2 \subseteq H_1 \cup H_t$, which contradicts the fact that $\{H_1,\dots,H_t\}$ is irreducible.
Also notice that the area $H_1 \cap H_t$ is entirely to the right of $p$, and is thus contained in $H_2$.    
\end{proof}

\subsection{An $n^{O(\mathsf{opt})}$-time exact algorithm} \label{sec-n^k}
In this section, we show how to compute an (exact) optimal solution of the instance $(S,S',\mathcal{H})$ in $n^{O(\mathsf{opt})}$ time.
It suffices to solve a decision problem: given an integer $k \geq 0$, find a subset $\mathcal{Z} \subseteq \mathcal{H}$ which covers $S$ and satisfies $\mathsf{memb}(S',\mathcal{Z}) \leq k$ or decide that such a subset does not exist.
As long as this problem can be solved in $n^{O(k)}$ time, by trying $k = 1, \dots, |\mathcal{H}|$, we can finally compute an optimal solution of $(S,S',\mathcal{H})$ in $n^{O(\mathsf{opt})}$ time.
In what follows, a \textit{valid} solution of $(S,S',\mathcal{H})$ refers to a subset $\mathcal{Z} \subseteq \mathcal{H}$ which covers $S$ and satisfies $\mathsf{memb}(S',\mathcal{Z}) \leq k$.

Let $\Delta$ be a sufficiently large number such that $S \cup S' \subseteq [-\Delta,\Delta]^2$.
For convenience, we add to $\mathcal{H}$ four dummy halfplanes with equations $y \leq -\Delta$, $y \geq \Delta$, $x \leq -\Delta$, and $x \geq \Delta$.
As these dummy halfplanes does not contain any points in $S \cup S'$, including them in $\mathcal{H}$ does not change the problem.
We say a set of halfplanes is \textit{regular} if it is irreducible and its complement region is nonempty and bounded.
We have the following simple observation.
\begin{fact} \label{fact-regular}
If $(S,S',\mathcal{H})$ has a valid solution, then either it has a regular valid solution or it has a valid solution that covers the entire plane $\mathbb{R}^2$.
\end{fact}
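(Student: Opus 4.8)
The plan is to take an arbitrary valid solution $\mathcal{Z}\subseteq\mathcal{H}$ and clean it up in two stages: first enlarge it by the four dummy halfplanes so that its complement region becomes bounded, then shrink it by deleting redundant halfplanes so that it becomes irreducible, checking that neither stage breaks validity. So let $\mathcal{Z}$ be a valid solution, i.e.\ $\mathcal{Z}$ covers $S$ and $\mathsf{memb}(S',\mathcal{Z})\le k$. First I would add all four dummy halfplanes to $\mathcal{Z}$. This keeps $S$ covered, since $\bigcup\mathcal{Z}$ only grows; and since the dummy halfplanes contain no point of $S'$, the value $\mathsf{memb}(S',\mathcal{Z})$ is unchanged, so $\mathcal{Z}$ is still valid. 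If now $\bigcup\mathcal{Z}=\mathbb{R}^2$, we are done: $\mathcal{Z}$ is a valid solution covering the whole plane. Otherwise $\bigcup\mathcal{Z}\ne\mathbb{R}^2$, so the complement region $C$ of $\mathcal{Z}$ is nonempty; and because the dummy halfplanes $y\le-\Delta$, $y\ge\Delta$, $x\le-\Delta$, $x\ge\Delta$ all lie in $\mathcal{Z}$ and their union is $\mathbb{R}^2\setminus(-\Delta,\Delta)^2$, we have $\mathbb{R}^2\setminus\bigcup\mathcal{Z}\subseteq(-\Delta,\Delta)^2$, hence $C\subseteq[-\Delta,\Delta]^2$ is bounded.

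Second, while $\mathcal{Z}$ contains a redundant halfplane $H$ --- one with $\bigcup(\mathcal{Z}\setminus\{H\})=\bigcup\mathcal{Z}$ --- I would delete $H$ from $\mathcal{Z}$. This process terminates since $|\mathcal{Z}|$ strictly decreases, and the final set is irreducible. Every deletion leaves $\bigcup\mathcal{Z}$ (and hence the complement region) untouched, so at the end $\bigcup\mathcal{Z}$ is exactly what it was after the first stage: thus $\mathcal{Z}$ still covers $S$, and its complement region is still $C$, which is nonempty and bounded. Finally, since the final $\mathcal{Z}$ is contained in the one obtained after the first stage, $\mathsf{memb}(S',\mathcal{Z})$ cannot have increased, so $\mathcal{Z}$ is still valid. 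Hence $\mathcal{Z}$ is a regular valid solution, as desired.

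I do not expect a genuine obstacle here; the only delicate point is the order of the two stages and their compatibility. The dummies must be added before the pruning, because pruning never changes $\bigcup\mathcal{Z}$ and therefore cannot uncover a point of $S$, cannot violate the membership bound, and cannot undo the boundedness of the complement region that the dummies supply; conversely, the dummies are disjoint from $S'$, so they never harm the membership bound. It is also worth noting that the branch $\bigcup\mathcal{Z}=\mathbb{R}^2$ genuinely requires the second alternative of the statement, since a family of halfplanes covering the whole plane has empty complement region and hence can never be regular.
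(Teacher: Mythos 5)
Your proposal is correct and follows essentially the same route as the paper: add the four dummy halfplanes (harmless since they avoid $S\cup S'$), then prune redundant halfplanes to reach an irreducible set whose union—and hence complement region—is unchanged, concluding that it is either a plane-covering valid solution or a regular one. The only cosmetic difference is that you split into the $\bigcup\mathcal{Z}=\mathbb{R}^2$ case before pruning rather than after, which is equivalent because pruning preserves the union.
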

\begin{proof}
Consider a valid solution $\mathcal{Z} \subseteq \mathcal{H}$ of $(S,S',\mathcal{H})$.
We shall create another valid solution which either is regular or covers $\mathbb{R}^2$.
First, we add the four dummy halfplanes to $\mathcal{Z}$ and let $\mathcal{Z}' \supseteq \mathcal{Z}$ denote the resulting set.
Then $\mathcal{Z}'$ covers $S$ and $\mathsf{memb}(S',\mathcal{Z}') = \mathsf{memb}(S',\mathcal{Z})$. 
So $\mathcal{Z}'$ is also a valid solution.
%Next, we keep removing redundant halfplanes from $\mathcal{Z}'$ until it becomes irreducible.
Next, we obtain an irreducible subset $\mathcal{Z}'' \subseteq \mathcal{Z}'$ which has the same complement region as $\mathcal{Z}'$ by keeping removing redundant halfplanes from $\mathcal{Z}'$.
Now $\mathcal{Z}''$ still covers $S$ and $\mathsf{memb}(S',\mathcal{Z}'') \leq \mathsf{memb}(S',\mathcal{Z}')$, which implies that $\mathcal{Z}''$ is also a valid solution.
If $\bigcup \mathcal{Z}'' = \mathbb{R}^2$, we are done.
%Note that $\bigcup \mathcal{Z}'' \neq \mathbb{R}^2$.
%Indeed, if $\bigcup \mathcal{Z}'' \neq \mathbb{R}^2$, then by Helly's theorem, there exists three halfplanes in $\mathcal{Z}''$ which covers $\mathbb{R}^2$, contradicting the irreducibility of $\mathcal{Z}''$.
Otherwise, the complement region of $\mathcal{Z}''$ is nonempty.
Furthermore, the complement region of $\mathcal{Z}''$ is the same as that of $\mathcal{Z}'$, where the latter is bounded because the four dummy halfplanes are all in $\mathcal{Z}'$.
Therefore, the complement region of $\mathcal{Z}''$ is bounded.
This implies $\mathcal{Z}''$ is regular, because $\bigcup \mathcal{Z}''$ is irreducible.
\end{proof}

If $(S,S',\mathcal{H})$ has a valid solution that covers $\mathbb{R}^2$, then it also has an irreducible valid solution that covers $\mathbb{R}^2$, which is of size at most $3$ by Helly's theorem.
Therefore, in this case, we can solve the problem in $n^{O(1)}$ time by simply enumerating all subsets of $\mathcal{H}$ of size at most $3$.
Otherwise, by the above fact, it suffices to check whether there exists a regular valid solution of $(S,S',\mathcal{H})$.
In what follows, we assume $(S,S',\mathcal{H})$ has a regular valid solution and show how to find such a solution in $n^{O(k)}$ time.
If our algorithm does not find a regular valid solution at the end, we can conclude its non-existence.
Let $\mathcal{Z} \subseteq \mathcal{H}$ be a (unknown) regular valid solution of $(S,S',\mathcal{H})$.
By definition, the complement region of $\mathcal{Z}$ is nonempty, and is a (bounded) convex polygon.
Consider the arrangement $\mathcal{A}$ of the boundary lines of the halfplanes in $\mathcal{H}$.
This arrangement has $O(n^2)$ faces, among which at least one face is contained in the complement region of $\mathcal{Z}$.
We simply guess such a face.
By making $O(n^2)$ guesses, we can assume that we know a face $F$ in the complement region of $\mathcal{Z}$.
Then we take a point $p$ in the interior of $F$, which is also in the interior of the complement region of $\mathcal{Z}$.
%Our algorithm first guesses a point $p$ in the interior of the complement region of $\mathcal{Z}$.

Now the problem becomes finding a regular valid solution of $(S,S',\mathcal{H})$ whose complement region contains $p$.
Therefore, we can remove from $\mathcal{H}$ all halfplanes that contain $p$.
Now the complement region of any subset of $\mathcal{H}$ contains $p$.
%In fact, we can further transform the problem a bit as follows.
We say a convex polygon $\varGamma$ is \textit{$\mathcal{H}$-compatible} if each edge $e$ of $\varGamma$ is a portion of the boundary line of some halfplane $H \in \mathcal{H}$ such that $\varGamma \cap H = e$ (or equivalently $H$ does not contain $\varGamma$).
Note that the complement region of a regular valid solution is an $\mathcal{H}$-compatible convex polygon $\varGamma$ which satisfies \textbf{(i)} no point in $S$ lies in the interior of $\varGamma$ and \textbf{(ii)} for any $k+1$ edges $e_1,\dots,e_{k+1}$ of $\varGamma$, the intersection $\bigcap_{i=1}^{k+1} H(e_i)$ does not contain any point in $S'$; here $H(e) \in \mathcal{H}$ denotes the halfplane whose boundary line containing $e$ and $\varGamma \cap H = e$.
%Note that such convex polygons are one-to-one corresponding to the regular valid solutions whose complement region contains $p$.
On the other hand, every $\mathcal{H}$-compatible convex polygon satisfying conditions \textbf{(i)} and \textbf{(ii)} is the complement region of a regular valid solution of $(S,S',\mathcal{H})$, which is just the set of halfplanes corresponding to the edges of $\varGamma$.
With this observation, it suffices to find an $\mathcal{H}$-compatible convex polygon $\varGamma$ satisfying the two conditions.
The follow fact can be used to simplify condition \textbf{(ii)}.

\begin{fact}
If there exist $t$ edges $e_1,\dots,e_t$ of an $\mathcal{H}$-compatible convex polygon $\varGamma$ with $(\bigcap_{i=1}^t H(e_i)) \cap S' \neq \emptyset$, then there exist $t$ consecutive edges $f_1,\dots,f_t$ of $\varGamma$ with $(\bigcap_{i=1}^t H(f_i)) \cap S' \neq \emptyset$.
\end{fact}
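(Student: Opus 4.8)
The plan is to fix a point $q \in (\bigcap_{i=1}^{t} H(e_i)) \cap S'$ (we may assume $t \ge 1$, the case $t=0$ being vacuous) and to reduce everything to one structural fact about $\varGamma$: \emph{the edges $g$ of $\varGamma$ with $q \in H(g)$ form a cyclically consecutive block of edges along $\partial\varGamma$.} Once this is known, the conclusion is immediate: $e_1,\dots,e_t$ are $t$ distinct such edges, so the block has at least $t$ edges, and taking any $t$ consecutive edges of the block to be $f_1,\dots,f_t$ yields $q \in \bigcap_{i=1}^{t} H(f_i)$, hence $(\bigcap_{i=1}^{t} H(f_i)) \cap S' \neq \emptyset$.

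To establish the structural fact, I would enumerate the edges of $\varGamma$ as $g_1,\dots,g_m$ counterclockwise around $\partial\varGamma$ and attach to each $g_j$ its outward unit normal $\vec n_j$, i.e. the unit vector normal to the line of $g_j$ that points into $H(g_j)$; this is well-defined and genuinely points away from $\varGamma$ precisely because $\varGamma \cap H(g_j) = g_j$. The argument then rests on two ingredients. First, as one traverses $\partial\varGamma$ counterclockwise, the normals $\vec n_1,\dots,\vec n_m$ occur in counterclockwise circular order on the unit circle (a standard property of convex polygons; for an unbounded $\varGamma$ the normals are still monotone, simply not completing a full turn), so edges that are consecutive on $\varGamma$ have $\vec n_j$'s that are consecutive in this circular order. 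Second, $q \in H(g_j)$ holds iff $\langle x-q,\vec n_j\rangle \le 0$ for all $x\in\varGamma$, i.e. iff $\vec n_j$ lies in the set $K := \{\vec v \in \mathbb{R}^2 : \langle x-q,\vec v\rangle \le 0 \text{ for all } x\in\varGamma\}$. Being an intersection of closed halfplanes through the origin, $K$ is a closed convex cone, and since $\varGamma$ is not a single point we have $K \neq \mathbb{R}^2$, so $K$ meets the unit circle in a single arc that is not the whole circle. Combining the two ingredients: the indices $j$ with $q\in H(g_j)$ are exactly those with $\vec n_j$ lying in this arc, and a collection of circularly ordered points restricted to a proper arc forms a cyclically consecutive block — which is the structural fact.

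The one detail I would be careful to record is that $q$ lies outside $\operatorname{int}\varGamma$ (so that the block is nonempty and the arc proper), but this is forced by $t\ge 1$: $q\in H(e_1)$ while $\varGamma\cap H(e_1)=e_1\subseteq\partial\varGamma$, so $H(e_1)$ contains no interior point of $\varGamma$. I do not anticipate a real obstacle here; the only mildly delicate points are the two cited generalities — circular monotonicity of the edge normals of a convex polygon, and the fact that a convex cone in the plane cuts the unit circle in an arc — and both are routine. If one prefers to avoid the support-cone phrasing, the structural fact is exactly the classical statement that the edges of a convex polygon whose supporting lines separate a fixed exterior point from the polygon form a contiguous chain of the boundary, and one could invoke that directly.
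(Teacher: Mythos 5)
Your proof is correct, but it takes a genuinely different route from the paper's. The paper works with the halfplane-family machinery it set up earlier: it takes $\mathcal{R}=\{H(e_1),\dots,H(e_t)\}$, notes it is irreducible (each edge's interior is covered only by its own halfplane), invokes the angle-ordering fact for irreducible families with nonempty intersection to extract two angularly extremal edges $e^-,e^+$, lets $E$ be the chain of edges of $\varGamma$ from $e^-$ to $e^+$, and uses the fact $\bigcap\mathcal{R}=H(e^-)\cap H(e^+)=\bigcap_{e\in E}H(e)$ to conclude that any $t$ consecutive edges inside $E$ work; note this argument leans on the standing assumption $\bigcup\mathcal{H}\neq\mathbb{R}^2$ and on Facts 4.1--4.2. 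You instead fix a witness point $q\in S'$ and prove the stronger structural statement that \emph{all} edges $g$ with $q\in H(g)$ form a contiguous block along $\partial\varGamma$, via the circular monotonicity of outward normals and the observation that $q\in H(g)$ iff the outward normal of $g$ lies in the polar cone $K$ of $\varGamma-q$, which meets the unit circle in a proper arc. This is more self-contained (no reliance on the irreducibility facts or on $\bigcup\mathcal{H}\neq\mathbb{R}^2$) and is essentially the classical ``supporting lines separating an exterior point form a contiguous chain'' statement you cite. The only pedantic caveat is your step ``$K\neq\mathbb{R}^2$ implies $K\cap S^1$ is a single proper arc'': this fails if $K$ is a line through the origin, which would require $\varGamma$ to be contained in a line through $q$; since $\varGamma$ here is a genuine two-dimensional convex polygon (in the paper it has the point $p$ in its interior), this degenerate case cannot occur, but it is worth saying explicitly that $\varGamma$ is full-dimensional rather than merely ``not a single point.''
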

\begin{proof}
Set $\mathcal{R} = \{H(e_1),\dots,H(e_t)\}$ and assume $(\bigcap \mathcal{R}) \cap S' \neq \emptyset$, which implies $\bigcap \mathcal{R} \neq \emptyset$.
Note that the set of halfplanes corresponding to the edges of $\varGamma$ are irreducible, because $\varGamma$ is $\mathcal{H}$-compatible and thus the interior of each edge $e$ of $\varGamma$ can only be covered by the halfplane $H(e)$.
In particular, $\mathcal{R}$ is irreducible.
Also, $\bigcup \mathcal{R} \neq \mathbb{R}^2$ by our assumption $\bigcup \mathcal{H} \neq \mathbb{R}^2$.
Therefore, by (ii) of Fact~\ref{fact-2propR}, there exist $e^-,e^+\in \{e_1,\dots,e_t\}$ such that $\mathsf{ang}(H(e^-),H(e_i)) \leq \mathsf{ang}(H(e^-),H(e^+)) < \pi$ for all $i \in [t]$.
Now we go clockwise along the boundary of $\varGamma$ from $e^-$ to $e^+$, and let $E$ be the set of edges of $\varGamma$ we visit (including $e^-$ and $e^+$).
Clearly, $e_i \in E$ for all $i \in [t]$ and thus $|E| \geq t$.
Furthermore, $0 < \mathsf{ang}(H(e^-),H(e)) < \mathsf{ang}(H(e^-),H(e^+)) < \pi$ for all $e \in E \backslash \{e^-,e^+\}$.
Define $\mathcal{R}' = \{H(e): e \in E\}$.
Since $\mathcal{R}$ and $\mathcal{R}'$ are both irreducible, we can apply (ii) of Fact~\ref{fact-2propH} to deduce $\bigcap \mathcal{R} = H(e^-) \cap H(e^+) = \bigcap \mathcal{R}'$, which implies $(\bigcap \mathcal{R}') \cap S' \neq \emptyset$.
Finally, because $E$ consists of consecutive edges of $\varGamma$ and $|E| \geq t$, there exist $f_1,\dots,f_t \in E$ which are $t$ consecutive edges of $\varGamma$.
We have $\bigcap \mathcal{R}' \subseteq \bigcap_{i=1}^t H(f_i)$ and thus $(\bigcap_{i=1}^t H(f_i)) \cap S' \neq \emptyset$.
\end{proof}

By the above fact, we only need to find an $\mathcal{H}$-compatible convex polygon $\varGamma$ which satisfies \textbf{(i)} no point in $S$ lies in the interior of $\varGamma$ and \textbf{(ii)} $(\bigcap_{i=1}^{k+1} H(e_i)) \cap S' \neq \emptyset$ for any $k+1$ \textit{consecutive} edges $e_1,\dots,e_{k+1}$ of $\varGamma$.
For convenience, we say $\varGamma$ is \textit{well-behaved} if it satisfies conditions \textbf{(i)} and \textbf{(ii)}.
Next, we reduce this problem to a shortest-cycle problem in a (weighted) directed graph $G$ as follows.
%Consider the arrangement of the boundary lines of the halfplanes in $\mathcal{H}$, which we denote by $\mathcal{A}$.
Let $\mathcal{L}$ denote the set of boundary lines of halfplanes in $\mathcal{H}$.
We consider every segment $s$ in the plane which is on some line $\ell \in \mathcal{L}$ and satisfies that each endpoint of $s$ is the intersection point of $\ell$ and another line in $\mathcal{L}$.
We use $\varPhi$ to denote the set of these segments.
Note that $|\varPhi| = O(n^3)$, as $\varPhi$ contains $O(n^2)$ segments on each line $\ell \in \mathcal{L}$.
Clearly, the edges of an $\mathcal{H}$-compatible convex polygon are all segments in $\varPhi$.
Consider a segment $\phi \in \varPhi$.
Recall that the point $p$ is the interior of $F$, which is a face of the arrangement $\mathcal{A}$.
Thus, no line in $\mathcal{L}$ goes through $p$.
It follows that for every segment $\phi \in \varPhi$, the two endpoints of $\phi$ and $p$ form a triangle $\Delta_\phi$.
If $p \rightarrow a \rightarrow b$ is the clockwise ordering of the three vertices of $\Delta_\phi$ from $p$, then we call $a$ the \textit{left} endpoint of $\phi$ and call $b$ the \textit{right} endpoint of $\phi$.
Clearly, $\mathsf{ang}(\overrightarrow{pa},\overrightarrow{pb}) < \pi$.
The vertices of the graph $G$ to be constructed are one-to-one corresponding to the $(k+1)$-tuples $(\phi_0,\phi_1,\dots,\phi_k) \in \varPhi^{k+1}$ which satisfy the following three conditions.

\smallskip
\begin{enumerate}[~1.]
    \item The left endpoint of $\phi_i$ is the right endpoint of $\phi_{i-1}$ for all $i \in [k]$.
    Below we use $a_i$ to denote the left endpoint of $\phi_i$ (i.e., the right endpoint of $\phi_{i-1}$).
    This condition guarantees that the segments $\phi_0,\phi_1,\dots,\phi_k$ form a polygonal chain of $k+1$ pieces.
    \item $\mathsf{ang}(\overrightarrow{a_{i-1}a_i},\overrightarrow{a_ia_{i+1}}) \leq \pi$ for all $i \in [k]$.
    %where $a_i$ is the left endpoint of $\phi_i$ (and also the right endpoint of $\phi_{i-1}$).
    This condition guarantees that the chain formed by $\phi_0,\phi_1,\dots,\phi_k$ is \textit{clockwise convex}, in the sense that when we go along the chain from $a_0$ to $a_{k+1}$, we always turn \textit{right} at the vertices of the chain.
    Figure~\ref{fig-chain} shows a chain satisfying this condition (and also condition~1).
    \item $S \cap (\bigcup_{i=0}^k \Delta_{\phi_i}) \subseteq \bigcup_{i=0}^k \phi_i$ and $(\bigcap_{i=0}^k H(\phi_i)) \cap S' = \emptyset$.
\end{enumerate}
\smallskip

\begin{figure}
    \centering
    \includegraphics[height=5.5cm]{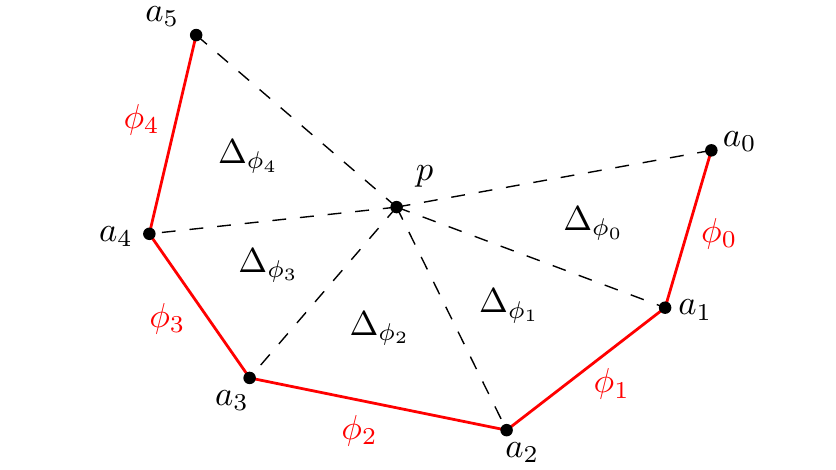}
    \caption{Illustrating the conditions for a vertex of $G$.}
    \label{fig-chain}
\end{figure}

Intuitively, the $(k+1)$-tuple corresponding to each vertex of $G$ represents a possible choice of $k+1$ consecutive edges of the $\mathcal{H}$-compatible convex polygon we are looking for.
For two vertices $v = (\phi_0,\phi_1,\dots,\phi_k)$ and $v' = (\phi_0',\phi_1',\dots,\phi_k')$ such that $(\phi_1,\dots,\phi_k) = (\phi_0',\phi_1',\dots,\phi_{k-1}')$, we add a directed edge from $v$ to $v'$ with weight $\mathsf{ang}(\overrightarrow{pa},\overrightarrow{pb})$, where $a$ is the left endpoint of $\phi_0$ and $b$ is the left endpoint of $\phi_0' = \phi_1$ (which is also the right endpoint of $\phi_0$ by condition~1 above).
Note that the weight of every edge of $G$ is positive, since the two endpoints of every $\phi \in \varPhi$ and $p$ form a triangle $\Delta_\phi$.
The key observation is the following lemma.
\begin{lemma} \label{lem-shortcycle}
There exists a well-behaved $\mathcal{H}$-compatible convex polygon containing $p$ iff the (weighted) length of a shortest cycle in $G$ is exactly $2\pi$.
\end{lemma}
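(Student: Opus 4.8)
The plan is to prove both implications, after a preliminary observation reducing the claim to the existence of a single cycle of weight $2\pi$. First I would analyze an arbitrary cycle $v_0\to v_1\to\cdots\to v_{\ell-1}\to v_0$ of $G$: writing $v_j=(\phi_0^{(j)},\dots,\phi_k^{(j)})$, the definition of the edges of $G$ forces $\phi_i^{(j)}=\phi_0^{(j+i)}$ (indices mod $\ell$), so setting $\psi_j:=\phi_0^{(j)}$ the cycle is encoded by a cyclic sequence of segments $\psi_0,\dots,\psi_{\ell-1}$ with $v_j=(\psi_j,\psi_{j+1},\dots,\psi_{j+k})$. Condition~1 forces consecutive $\psi$'s to share an endpoint, so they form a closed polygonal chain $a_0\xrightarrow{\psi_0}a_1\xrightarrow{\psi_1}\cdots\xrightarrow{\psi_{\ell-1}}a_\ell=a_0$ (with $a_j$ the left endpoint of $\psi_j$); condition~2 makes every turn of this chain clockwise; and the weight of the cycle equals $\sum_{j=0}^{\ell-1}\mathsf{ang}(\overrightarrow{pa_j},\overrightarrow{pa_{j+1}})$, the total clockwise angular sweep of the ray from $p$ through the $a_j$. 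Since each summand lies in $(0,\pi)$ (as $p$ lies on no line of $\mathcal{L}$) and the sweep returns to its start, this weight is a positive multiple of $2\pi$; hence every cycle has weight $\ge 2\pi$, and ``shortest cycle $=2\pi$'' is equivalent to ``$G$ has a cycle of weight $2\pi$''.

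For the ``if'' direction I would take a cycle of weight exactly $2\pi$ and show its chain bounds a well-behaved $\mathcal{H}$-compatible convex polygon $\varGamma\ni p$. The cumulative angles $\theta_j=\sum_{i<j}\mathsf{ang}(\overrightarrow{pa_i},\overrightarrow{pa_{i+1}})$ satisfy $0=\theta_0<\theta_1<\cdots<\theta_{\ell-1}<\theta_\ell=2\pi$, so the vertices lie at strictly increasing clockwise angles around $p$, making exactly one full turn; using that each $\psi_j$ subtends an angle $<\pi$ at $p$, I would argue every ray from $p$ meets $\bigcup_j\psi_j$ in exactly one point, so the chain is a simple closed curve bounding a region $\varGamma$ star-shaped from $p$ with $p\in\operatorname{int}(\varGamma)$, and convex because its vertices appear clockwise around $p$ and all turns are clockwise. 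Then $\varGamma$ is $\mathcal{H}$-compatible since each $\psi_j$ lies on the bounding line of the unique $H(\psi_j)\in\mathcal{H}$ not containing $p$ and $\varGamma$ lies on the $p$-side of that line, giving $\varGamma\cap H(\psi_j)=\psi_j$; and the two parts of condition~3 of the vertices yield exactly the well-behavedness conditions --- the first rules out points of $S$ in the interior of $\varGamma$ (every interior point, including points on a spoke $\overline{pa_j}$, lies in $\bigcup_i\Delta_{\phi_i^{(j')}}$ for a suitable window $v_{j'}$), the second gives $(\bigcap_i H(e_i))\cap S'=\emptyset$ for all $k+1$ consecutive edges.

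For the ``only if'' direction I would start from a well-behaved $\mathcal{H}$-compatible convex polygon $\varGamma\ni p$ with clockwise edges $e_0,\dots,e_{m-1}$ and vertices $a_0,\dots,a_{m-1}$, $e_j=\overline{a_ja_{j+1}}$; each $e_j\in\varPhi$, and since $p\in\operatorname{int}(\varGamma)$ the left/right endpoints of $e_j$ are $a_j$/$a_{j+1}$ with $\mathsf{ang}(\overrightarrow{pa_j},\overrightarrow{pa_{j+1}})<\pi$. Setting $v_j:=(e_{j\bmod m},\dots,e_{(j+k)\bmod m})$ for $j=0,\dots,m-1$ (the window wrapping around $\varGamma$ when $m\le k$), these are $m$ distinct tuples forming a cycle of $G$ of weight $\sum_{j=0}^{m-1}\mathsf{ang}(\overrightarrow{pa_j},\overrightarrow{pa_{j+1}})=2\pi$, once I check each $v_j$ is a genuine vertex: conditions~1 and~2 follow from convexity of $\varGamma$ and $p\in\operatorname{int}(\varGamma)$; for condition~3, $\bigcup_i\Delta_{\phi_i^{(j)}}$ is a pie slice of $\varGamma$ whose boundary touches $\partial\varGamma$ only along $e_j,\dots,e_{j+k}$, so well-behavedness (i) keeps points of $S$ off its interior, and $\bigcap_i H(\phi_i^{(j)})$ is either the intersection of $k+1$ consecutive edge-halfplanes (disjoint from $S'$ by well-behavedness (ii)) or, when the window wraps all the way around, the intersection of all edge-halfplanes of $\varGamma$, which is empty for any bounded convex polygon (two edges extreme in a fixed direction already have disjoint outer halfplanes) and hence disjoint from $S'$ as well.

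I expect the main obstacle to be the reconstruction step in the ``if'' direction: a cycle of weight $2\pi$ could a priori encode a closed, everywhere-clockwise chain that winds once around $p$ yet self-intersects, and it is exactly the combination of ``each segment subtends $<\pi$ at $p$'' with ``the cumulative angles are strictly increasing and sum to exactly $2\pi$'' that upgrades this to a genuine simple convex polygon. Conditions~1--3 defining the vertices of $G$ were tailored so that this reconstruction goes through; the remaining verifications are routine bookkeeping.
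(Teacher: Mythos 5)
Your proposal is correct and takes essentially the same route as the paper's proof: the same preliminary observation that every cycle of $G$ has weight a positive multiple of $2\pi$ (hence at least $2\pi$), the same reconstruction of a convex, $\mathcal{H}$-compatible, well-behaved polygon from a weight-$2\pi$ cycle via the strictly increasing angular sweep around $p$, and the same converse construction from the edges of $\varGamma$ (you even make explicit the wrap-around case $m\le k$, which the paper treats implicitly). The only quibble is your parenthetical justification in that wrap-around case: two edges extreme in a fixed direction need not have disjoint outer halfplanes (they need not be parallel, and non-opposite halfplanes always intersect); the claim that the intersection of all outer halfplanes of a bounded convex polygon is empty is nevertheless true --- e.g.\ because the outer normals positively span the plane, or by comparing with an interior point --- so this is a cosmetic fix rather than a gap.
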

\begin{proof}
We first observe that the (weighted) length of any cycle in $G$ is a multiple of $2 \pi$ and is at least $2 \pi$.
For a vertex $v = (\phi_0,\phi_1,\dots,\phi_k)$ of $G$, we denote by $a_v$ the left endpoint of $\phi_0$.
Consider a cycle $(v_0,v_1,\dots,v_r)$ in $G$ where $v_r = v_0$.
By the construction of $G$, the length of the cycle $(v_0,\dots,v_r)$ is $\sum_{i=0}^{r-1} \mathsf{ang}(\overrightarrow{pa_{v_i}},\overrightarrow{pa_{v_{i+1}}})$, which is a multiple of $2 \pi$ since $\overrightarrow{pa_{v_0}} = \overrightarrow{pa_{v_r}}$.
Furthermore, as the weight of every edge of $G$ is positive, the length of the cycle is greater than 0 and thus is at least $2 \pi$.
With this observation, it now suffices to show that there exists a well-behaved $\mathcal{H}$-compatible convex polygon containing $p$ iff there exists a cycle in $G$ of length exactly $2\pi$.

For the ``if'' part, assume there exists a cycle $(v_0,v_1,\dots,v_r)$ in $G$ of length exactly $2\pi$, where $v_r = v_0$.
By the construction of $G$, there exist $\phi_0,\phi_1,\dots,\phi_{r-1} \in \varPhi$ such that $v_i = (\phi_i,\dots,\phi_{i+k})$ for any $i \geq 0$; here for convenience, we define $\phi_i = \phi_{i \text{ mod } r}$ for any $i \geq r$.
Let $a_i$ denote the left endpoint of $\phi_i$.
As the length of $(v_0,v_1,\dots,v_r)$ is $2 \pi$, we have $\sum_{i=0}^{r-1} \mathsf{ang}(\overrightarrow{pa_i},\overrightarrow{pa_{i+1}}) = 2 \pi$.
Therefore, the points $a_0,a_1,\dots,a_{r-1}$ are located around $p$ in clockwise order, and the segments $\phi_0,\phi_1,\dots,\phi_{r-1}$ enclose a simple polygon $\varGamma$ containing $p$.
We have $\varGamma = \bigcup_{i=0}^{r-1} \Delta_{\phi_i}$.
Recall the three conditions for defining the vertices of $G$.
As $(\phi_i,\dots,\phi_{i+k})$ is a vertex of $G$ for every $i$, condition~2 guarantees the convexity of $\varGamma$.
Furthermore, since $p \notin H(\phi_i)$, which implies that $\varGamma$ and $H(\phi_i)$ are on the opposite sides of $\phi_i$, i.e., $\varGamma \cap H(\phi_i) = \phi_i$.
Therefore, $\varGamma$ is an $\mathcal{H}$-compatible convex polygon.
It suffices to verify that $\varGamma$ is well-behaved.
Note that $S \cap \varGamma = S \cap (\bigcup_{i=0}^{r-1} \Delta_{\phi_i}) = \bigcup_{i=0}^{r-1} (S \cap (\bigcup_{j=i}^{i+k} \Delta_{\phi_j}))$.
By condition~3, we have $S \cap (\bigcup_{j=i}^{i+k} \Delta_{\phi_j}) \subseteq \bigcup_{j=i}^{i+k} \phi_j$ for any $i$, which implies $S \cap \varGamma \subseteq \bigcup_{i=0}^{r-1} \phi_i$, i.e., no point in $S$ lies in the interior of $\varGamma$.
Again by condition~3, $(\bigcup_{j=i}^{i+k} H(\phi_j)) \cap S' = \emptyset$ for any $i$.
Thus, $\varGamma$ is well-behaved.

For the ``only if'' part, assume there exists a well-behaved $\mathcal{H}$-compatible convex polygon $\varGamma$ containing $p$.
Suppose $\phi_0,\phi_1,\dots,\phi_{r-1}$ are the edges of $\varGamma$ in clockwise order.
These edges are segments in $\varPhi$ since $\varGamma$ is $\mathcal{H}$-compatible.
Same as before, we define $\phi_i = \phi_{i \text{ mod } r}$ for any $i \geq r$.
Now for any $i \geq 0$, the $(k+1)$-tuple $(\phi_i,\dots,\phi_{i+k})$ represents $k+1$ consecutive edges of $\varGamma$.
Based on the fact that $\varGamma$ is a well-behaved $\mathcal{H}$-compatible convex polygon containing $p$, one can easily check that $(\phi_i,\dots,\phi_{i+k})$ satisfies conditions 1, 2, 3 for defining the vertices of $G$.
First, condition~1 clearly holds.
By the convexity of $\varGamma$, we then have condition~2.
To see condition~3, note that $\varGamma$ is $\mathcal{H}$-compatible, and thus the halfplanes $H(\phi_i),\dots,H(\phi_{i+k})$ do not intersect the interior of $\varGamma$.
%As $p$ lies in the interior of $\varGamma$, we then have $p \notin \bigcup_{j=i}^{i+k} H(\phi_j)$.
But $\bigcup_{j=i}^{i+k} \Delta_{\phi_j}$ is contained in $\varGamma$.
Because $\varGamma$ is well-behaved, every point in $S \cap \varGamma$ lies on the boundary of $\varGamma$, and thus $S \cap \bigcup_{j=i}^{i+k} \Delta_{\phi_j} \subseteq \bigcup_{j=i}^{i+k} \phi_j$.
The emptiness of $(\bigcap_{j=i}^{i+k} H(\phi_j)) \cap S'$ also follows from the fact that $\varGamma$ is well-behaved.
Therefore, condition~3 holds.
It follows that $(\phi_i,\dots,\phi_{i+k})$ is a vertex of $G$, which we denote by $v_i$.
By the construction of $G$, there is an edge between $v_i$ and $v_{i+1}$ for any $i \geq 0$.
Since $v_r = v_0$, $(v_0,v_1,\dots,v_r)$ is a cycle in $G$.
If we use $a_i$ to denote the left endpoint of $\phi_i$ (which is also the right endpoint of $\phi_{i-1}$), then $a_0,a_1,\dots,a_{r-1}$ are the vertices of $\varGamma$ in clockwise order.
The length of the cycle $(v_0,v_1,\dots,v_r)$ is then $\sum_{i=0}^{r-1} \mathsf{ang}(\overrightarrow{pa_i},\overrightarrow{pa_{i+1}}) = 2 \pi$.
\end{proof}

Based on the above lemma, it suffices to compute a shortest cycle in $G$, which can be done by standard algorithms (e.g., Dijkstra) in polynomial time in the size of $G$.
Note that $G$ has $n^{O(k)}$ vertices.
Therefore, we obtain an $n^{O(k)}$-time algorithm for computing a set cover $\mathcal{Z} \subseteq \mathcal{H}$ of $S$ satisfying $\mathsf{memb}(S',\mathcal{Z}) \leq k$, if such a set cover exists.
By iteratively trying $k = 1,\dots,|\mathcal{H}|$, we can solve the MMGSC problem with halfplanes in $n^{O(\mathsf{opt})}$ time.

\subsection{An algorithm with constant additive error} \label{sec-adderr}
In this section, we show how to compute in polynomial time an approximation solution $\mathcal{Z} \subseteq \mathcal{H}$ of the instance $(S,S',\mathcal{H})$ with constant additive error, that is, $\mathsf{memb}(S',\mathcal{Z}) = \mathsf{opt}(S,S',\mathcal{H})+ O(1)$.
If $\bigcup \mathcal{H} = \mathbb{R}^2$, then by Helly's theorem, there exist $H_1,H_2,H_3 \in \mathcal{H}$ such that $H_1 \cup H_2 \cup H_3 = \mathbb{R}^2$.
In this case, we can take $\{H_1,H_2,H_3\}$ as our solution, which clearly has constant additive error.
So assume $\bigcup \mathcal{H} \neq \mathbb{R}^2$.
Our algorithm is in the spirit of local search.
However, different from most local-search algorithms which improve the ``quality'' of the solution in each step (via local modifications), our algorithm does not care about the quality (i.e., membership), and instead focuses on shrinking the complement region of the solution.
Formally, for two sets $\mathcal{Z}$ and $\mathcal{Z}'$, we write $\mathcal{Z} \prec \mathcal{Z}'$ if $\bigcup \mathcal{Z} \subsetneq \bigcup \mathcal{Z}'$, and $\mathcal{Z} \preceq \mathcal{Z}'$ if $\bigcup \mathcal{Z} \subseteq \bigcup \mathcal{Z}'$.
We define the following notion of ``locally (non-)improvable'' solutions.

\begin{definition}
A subset $\mathcal{Z} \subseteq \mathcal{H}$ is \textbf{$k$-expandable} if there exists $\mathcal{Z}' \subseteq \mathcal{H}$ such that $|\mathcal{Z} \backslash \mathcal{Z}'| = |\mathcal{Z}' \backslash \mathcal{Z}| \leq k$ and $\mathcal{Z} \prec \mathcal{Z}'$.
A subset of $\mathcal{H}$ is \textbf{$k$-stable} if it is not $k$-expandable.
\end{definition}

In other words, $\mathcal{Z} \subseteq \mathcal{H}$ is $k$-expandable (resp., $k$-stable) if we can (resp., cannot) replace $k$ halfplanes in $\mathcal{Z}$ with other $k$ halfplanes in $\mathcal{H}$ to shrink the complement region of $\mathcal{Z}$.
We are interested in subsets $\mathcal{Z} \subseteq \mathcal{H}$ that are \textit{minimum-size} set covers of $S$ and are $k$-stable.
Such a set can be constructed via the standard local-search procedure.

\begin{lemma} \label{lem-localsearch}
A minimum-size set cover $\mathcal{Z} \subseteq \mathcal{H}$ of $S$ that is $k$-stable can be computed in $n^{O(k)}$ time.
\end{lemma}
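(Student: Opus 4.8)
The plan is to start from any minimum-size set cover of $S$ and repeatedly apply $k$-swaps that shrink the complement region, stopping when no such swap exists. We must argue three things: (a) a minimum-size set cover of $S$ exists and one can be found in polynomial time; (b) each local improvement step can be detected and performed in $n^{O(k)}$ time; and (c) the process terminates after polynomially many steps, so that the overall running time is $n^{O(k)}$.

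For (a), recall that the minimum-size set cover problem with halfplanes is polynomial-time solvable (it is the classic unweighted halfplane cover problem, solvable e.g.\ via LP rounding / dynamic programming over the convex position of the dual). Fix one such minimum-size cover $\mathcal{Z}_0$. For (b), given a current minimum-size set cover $\mathcal{Z}$ of $S$, we enumerate all pairs $(\mathcal{A}, \mathcal{B})$ with $\mathcal{A} \subseteq \mathcal{Z}$, $\mathcal{B} \subseteq \mathcal{H} \setminus \mathcal{Z}$, $|\mathcal{A}| = |\mathcal{B}| \leq k$; there are $n^{O(k)}$ such pairs. For each, we form $\mathcal{Z}' = (\mathcal{Z} \setminus \mathcal{A}) \cup \mathcal{B}$ and test in polynomial time whether $\mathcal{Z}'$ is still a set cover of $S$ (checking that every point of $S$ lies in some halfplane of $\mathcal{Z}'$) and whether $\bigcup \mathcal{Z}' \supsetneq \bigcup \mathcal{Z}$ (the union of halfplanes is a convex-complement region, and containment of such regions — equivalently containment of the complement polygons — is decidable in polynomial time by checking the defining inequalities). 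Note that since $|\mathcal{Z}'| \le |\mathcal{Z}|$ and $\mathcal{Z}$ is a minimum-size set cover of $S$, any such $\mathcal{Z}'$ that covers $S$ is automatically also a minimum-size set cover of $S$; moreover $\bigcup \mathcal{Z} \subsetneq \bigcup \mathcal{Z}'$ forces $|\mathcal{Z}'| = |\mathcal{Z}|$ (a strictly larger union cannot be achieved with strictly fewer halfplanes when the cover is already minimum, but in any case we simply discard $\mathcal{Z}'$ if $|\mathcal{Z}'| < |\mathcal{Z}|$ and it fails to cover, or keep it only when it is a minimum-size cover whose union strictly grew). If some swap works, we replace $\mathcal{Z}$ by $\mathcal{Z}'$ and repeat; otherwise $\mathcal{Z}$ is $k$-stable and we output it.

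The main obstacle is (c): bounding the number of iterations. The relation $\mathcal{Z} \prec \mathcal{Z}'$ only says the union strictly grows, and there are infinitely many convex regions, so we need a combinatorial potential. The key point is that $\bigcup \mathcal{Z}$, for $\mathcal{Z} \subseteq \mathcal{H}$, is determined by its complement region, which is the intersection $\bigcap \{ \overline{\mathbb{R}^2 \setminus H} : H \in \mathcal{Z} \}$ of $|\mathcal{Z}|$ closed halfplanes; hence its complement region is a convex polygon each of whose edges lies on the boundary line of some halfplane in $\mathcal{H}$. There are only $O(n)$ such lines, so the complement region is determined by a subset of these $O(n)$ lines together with the side chosen on each, giving at most $n^{O(1)}$ distinct possible complement regions, hence at most $n^{O(1)}$ distinct values of $\bigcup \mathcal{Z}$ that can arise. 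Since every iteration strictly increases $\bigcup \mathcal{Z}$ under $\subsetneq$, and $\subsetneq$ on these finitely many regions has no infinite ascending chain, the number of iterations is $n^{O(1)}$. Each iteration costs $n^{O(k)}$, so the total running time is $n^{O(k)}$, which proves the lemma.

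\begin{proof}[Proof sketch]
Compute a minimum-size set cover $\mathcal{Z}$ of $S$ in polynomial time; this is possible since the minimum-size halfplane set cover problem is polynomial-time solvable. We then repeatedly attempt a local improvement: for every pair $(\mathcal{A},\mathcal{B})$ with $\mathcal{A} \subseteq \mathcal{Z}$, $\mathcal{B} \subseteq \mathcal{H} \setminus \mathcal{Z}$, and $|\mathcal{A}| = |\mathcal{B}| \le k$, let $\mathcal{Z}' = (\mathcal{Z} \setminus \mathcal{A}) \cup \mathcal{B}$ and check whether $\mathcal{Z}'$ covers $S$ and $\bigcup \mathcal{Z} \subsetneq \bigcup \mathcal{Z}'$. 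Both tests run in polynomial time: covering is checked pointwise over $S$, and $\bigcup \mathcal{Z} \subseteq \bigcup \mathcal{Z}'$ is equivalent to containment of the corresponding complement convex polygons, which can be decided from their defining halfplane inequalities. There are $n^{O(k)}$ pairs to try, so one improvement step takes $n^{O(k)}$ time. Whenever some $\mathcal{Z}'$ passes both tests we set $\mathcal{Z} \leftarrow \mathcal{Z}'$; note $|\mathcal{Z}'| \le |\mathcal{Z}|$, so if $\mathcal{Z}'$ covers $S$ it remains a minimum-size cover of $S$. When no pair works, $\mathcal{Z}$ is $k$-stable by definition, and we output it.

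It remains to bound the number of improvement steps. Every improvement strictly enlarges $\bigcup \mathcal{Z}$ with respect to $\subsetneq$. For any $\mathcal{Z} \subseteq \mathcal{H}$, the complement region of $\mathcal{Z}$ (the closure of $\mathbb{R}^2 \setminus \bigcup \mathcal{Z}$) is a convex polygon each of whose edges lies on the boundary line of a halfplane of $\mathcal{H}$; there are $O(n)$ such lines, so the complement region, and hence $\bigcup \mathcal{Z}$, takes at most $n^{O(1)}$ distinct values over all $\mathcal{Z} \subseteq \mathcal{H}$. Since $\subsetneq$ admits no infinite ascending chain on a finite set, the number of improvement steps is $n^{O(1)}$. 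Multiplying by the $n^{O(k)}$ cost per step gives total running time $n^{O(k)}$.
\end{proof}
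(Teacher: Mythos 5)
Your algorithm is the same as the paper's (compute a minimum-size cover in polynomial time, then repeatedly perform size-preserving swaps of at most $k$ halfplanes that strictly enlarge the union, each step found by brute force over the $n^{O(k)}$ candidate swaps), and parts (a) and (b) of your argument are fine; the coverage test you add is redundant but harmless, since a strictly larger union automatically still covers $S$. The genuine gap is in (c), the bound on the number of improvement steps. The claim that the complement region is ``determined by a subset of the $O(n)$ boundary lines, giving at most $n^{O(1)}$ distinct possible complement regions'' is false: a subset of $O(n)$ lines is one of $2^{\Theta(n)}$ possibilities, and this exponential count is actually attained. For instance, if the $n$ halfplanes of $\mathcal{H}$ are pairwise distinct, disjoint from the interior of a fixed disk, and tangent to it at distinct points, then each halfplane covers points arbitrarily close to its tangency point that no other halfplane covers, so any two distinct subsets of $\mathcal{H}$ have distinct unions; hence there are $2^{n}$ distinct unions and complement regions. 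With only finiteness in hand, your ascending-chain argument bounds the number of iterations by the number of distinct regions, i.e.\ exponentially, so the claimed $n^{O(k)}$ total running time does not follow --- and that running-time bound is the entire content of the lemma beyond mere termination.

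The conclusion you want (polynomially many iterations) is true, but needs a different argument. The paper charges each step to a halfplane that leaves $\mathcal{Z}$ and proves that a halfplane, once removed, never re-enters: if it re-entered, it would have to contribute an edge to the (only shrinking) complement region at a later step, and that edge would already have to be present at the intermediate step when the halfplane was absent, a contradiction; hence at most $n$ steps. Alternatively, your potential idea can be repaired directly: set $\mathcal{D}(\mathcal{Z}) = \{H \in \mathcal{H} : H \subseteq \bigcup \mathcal{Z}\}$. Then $\bigcup \mathcal{Z} = \bigcup \mathcal{D}(\mathcal{Z})$, so whenever $\bigcup \mathcal{Z} \subsetneq \bigcup \mathcal{Z}'$ we must have $\mathcal{D}(\mathcal{Z}) \subsetneq \mathcal{D}(\mathcal{Z}')$; since $|\mathcal{D}(\mathcal{Z})| \leq n$, there are at most $n$ improvement steps, which combined with the $n^{O(k)}$ cost per step gives the claimed bound.
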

\begin{proof}
The standard set cover problem for halfplanes is polynomial-time solvable.
So we can compute a minimum-size set cover $\mathcal{Z} \subseteq \mathcal{H}$ of $S$ in $n^{O(1)}$ time.
To further obtain a $k$-stable one, we keep doing the following procedure.
Whenever there exists $\mathcal{Z}' \subseteq \mathcal{H}$ such that $|\mathcal{Z} \backslash \mathcal{Z}'| = |\mathcal{Z}' \backslash \mathcal{Z}| \leq k$ and $\mathcal{Z} \prec \mathcal{Z}'$, we update $\mathcal{Z}$ to $\mathcal{Z}'$.
During this procedure, the size of $\mathcal{Z}$ does not change and the complement region of $\mathcal{Z}$ shrinks.
So $\mathcal{Z}$ is always a minimum-size set cover of $S$.
Furthermore, as the complement region of $\mathcal{Z}$ shrinks in every step, the procedure will finally terminate.
At the end, $\mathcal{Z}$ is not $k$-expandable and is thus $k$-stable.
This proves the correctness of our algorithm.
To see it takes $n^{O(k)}$ time, we show that \textbf{(i)} we terminate in $O(n)$ steps and \textbf{(ii)} each step can be implemented in $n^{O(k)}$ time.

For \textbf{(i)}, the key observation is that every halfplane $H \in \mathcal{H}$ can be removed from $\mathcal{Z}$ at most once.
Formally, we denote by $\mathcal{Z}_i$ the set $\mathcal{Z}$ after the $i$-th step of the procedure, and thus the original $\mathcal{Z}$ is $\mathcal{Z}_0$.
Let $P_i$ be the complement region of $\mathcal{Z}_i$.
Suppose $H \in \mathcal{Z}_{i-1}$ and $H \notin \mathcal{Z}_i$.
We claim that $H \notin \mathcal{Z}_j$ for all $j > i$.
Assume $H \in \mathcal{Z}_j$ for some $j > i$.
Since $\mathcal{Z}_j$ is a minimum-size set cover of $S$, $H$ is not redundant in $\mathcal{Z}_j$ and thus one edge $e$ of $P_j$ is defined by $H$, i.e., $e$ is a segment on the boundary line of $H$.
Note that $e$ is also a portion of the boundary of $P_{i-1}$, because $H \in \mathcal{Z}_{i-1}$ and $P_j \subseteq P_{i-1}$.
It follows that $e$ is a portion of the boundary of $P_i$, since $P_j \subseteq P_i \subseteq P_{i-1}$.
But this cannot be the case, as $H \notin \mathcal{Z}_i$.
Thus, $H \notin \mathcal{Z}_j$ for all $j > i$.
Now for every index $i \geq 1$, there exists at least one halfplane $H \in \mathcal{H}$ such that $H \in \mathcal{Z}_{i-1}$ and $H \notin \mathcal{Z}_i$, simply because $|\mathcal{Z}_{i-1}| = |\mathcal{Z}_i|$ and $\mathcal{Z}_{i-1} \neq \mathcal{Z}_i$.
We then charge the $i$-th step to this halfplane $H$.
By the above observation, each halfplane is charged at most once.
Therefore, the procedure terminates in at most $n$ steps.
To see \textbf{(ii)}, observe that in each step, the number of $\mathcal{Z}' \subseteq \mathcal{H}$ satisfying $|\mathcal{Z} \backslash \mathcal{Z}'| = |\mathcal{Z}' \backslash \mathcal{Z}| \leq k$ is bounded by $n^{O(k)}$, and these sets can be enumerated in $n^{O(k)}$ time.
So each step can be implemented in $n^{O(k)}$ time.
As a result, the entire algorithm terminates in $n^{O(k)}$ time.
\end{proof}

Our key observation is that any minimum-size set cover of $S$ that is $k$-stable has additive error at most $2$ in terms of MMGSC, even for $k = 1$.

\begin{lemma} \label{lem-adderr}
If $\mathcal{Z} \subseteq \mathcal{H}$ is a minimum-size set cover of $S$ that is $1$-stable, then we have $|\mathcal{Z}| \leq \mathsf{opt}(S,S',\mathcal{H}) + 2$.
\end{lemma}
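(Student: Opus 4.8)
The plan is to work with the complement region of $\mathcal{Z}$ and to charge its $r := |\mathcal{Z}|$ halfplanes against the membership of an optimal solution. If $\bigcup\mathcal{H} = \mathbb{R}^2$ then by Helly's theorem three halfplanes of $\mathcal{H}$ already cover $S$, so $r \le 3$ and the bound holds unless $\mathsf{opt}(S,S',\mathcal{H}) = 0$, a degenerate case that can be treated separately; so assume $\bigcup\mathcal{H} \neq \mathbb{R}^2$ (and, as elsewhere in the paper, that the complement regions in play are bounded). Since $\mathcal{Z}$ is a minimum-size cover it is irreducible, so its complement region $P$ is a nonempty convex polygon and each $H_i \in \mathcal{Z}$ contributes exactly one edge $e_i$ of $P$; reindex the halfplanes so that $e_1,\dots,e_r$ occur in this cyclic order along $\partial P$, with $e_i$ and $e_{i+1}$ meeting at a vertex $v_i$ (indices mod $r$). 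Being minimum-size, $\mathcal{Z}$ is also minimal, so each $H_i$ has a private point $p_i \in S$, covered by $H_i$ but by no other member of $\mathcal{Z}$; the condition that $p_i$ lies outside every $H_j$, $j \neq i$, places $p_i$ in the interior of the complement region of $\mathcal{Z}\setminus\{H_i\}$, which is $P$ with the ``cap'' $D_i$ beyond $e_i$ attached, and together with $p_i \in H_i$ this pins $p_i$ to $\overline{D_i}$, i.e.\ $p_i$ sits right next to the edge $e_i$.

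Now fix an optimal solution $\mathcal{O} \subseteq \mathcal{H}$, which we may take irreducible, so that $\mathcal{O}$ covers $S$ and $\mathsf{memb}(S',\mathcal{O}) = \mathsf{opt}(S,S',\mathcal{H})$. It suffices to produce a point $q \in S'$ contained in at least $r-2$ halfplanes of $\mathcal{O}$, since that gives $r \le \mathsf{opt}(S,S',\mathcal{H}) + 2$. For each $i$ choose $G_i \in \mathcal{O}$ with $p_i \in G_i$. The role of $1$-stability is to constrain the $G_i$: since no single swap can shrink $P$, no halfplane of $\mathcal{H}$ can cover the whole cap $D_i$ while also protruding beyond $\bigcup\mathcal{Z}$ (such a halfplane could replace $H_i$ in $\mathcal{Z}$ and strictly enlarge $\bigcup\mathcal{Z}$). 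Feeding the $G_i$ into this, together with the fact that $p_1,\dots,p_r$ lie in convex position along $\partial P$, the plan is to deduce that the boundary lines of $G_1,\dots,G_r$ cut $p_1,\dots,p_r$ off from $P$ in the same cyclic sense as $\partial H_1,\dots,\partial H_r$ do; consequently $G_1,\dots,G_r$ can be listed in an angular order of their normals that matches the cyclic order of $e_1,\dots,e_r$ around $P$, with the successive angular increments summing to $2\pi$.

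Given $G_1,\dots,G_r$ in such an angular order, discard the two whose normals straddle a fixed reference direction; the remaining $r-2$ halfplanes have normals lying in an open halfplane of directions, so by the angular-ordering property of halfplanes (Fact~\ref{fact-2propH}(ii)) their common intersection $I$ equals the intersection of its two extreme members and is a nonempty convex region with nonempty interior (nonemptiness is also visible from the fact that each of these halfplanes still contains one of the surrounding points $p_i$). Any $q \in I$ then lies in at least $r-2$ halfplanes of $\mathcal{O}$, so $\mathsf{memb}(S',\mathcal{O}) \ge r-2$ as soon as $I$ meets $S'$, and we are done.

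I expect two steps to carry the real weight. The first is converting ``$\mathcal{Z}$ is $1$-stable'' into the angular-ordering statement for the $G_i$: the difficulty is that each $G_i$ is only known to contain the single point $p_i$, not the whole cap $D_i$, so the swap argument must be set up with care — for instance by first replacing $G_i$ with a ``most extreme'' halfplane of $\mathcal{O}$ through $p_i$ and then reasoning about the convex polygon $P$ and the positions of the $p_i$ on $\partial P$. The second, and in my view the crux, is guaranteeing that the common-intersection region $I$ actually contains a point of $S'$; this is the place where the argument must exploit whatever structure $S'$ carries in the instances to which the lemma is applied (e.g.\ $S'$ being dense enough that a full-dimensional convex region always contains one of its points), and it is where a careless version of the statement would fail.
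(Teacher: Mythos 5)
Your plan proves the wrong quantity, and the step you yourself flag as the crux is exactly where it breaks. You charge all $r=|\mathcal{Z}|$ halfplanes at once and try to exhibit a single point $q\in S'$ lying in at least $r-2$ halfplanes of an optimal solution $\mathcal{O}$. No such point need exist: $S'$ is an arbitrary point set, so even if the intersection $I$ of your $r-2$ selected halfplanes of $\mathcal{O}$ is a full-dimensional convex region, nothing forces $I\cap S'\neq\emptyset$ (think of $n$ points of $S$ in convex position, each cut off by its own private halfplane: the unique cover has size $n$, is trivially $1$-stable, yet every point of $S'=S$ has membership $1$). So the global statement ``$|\mathcal{Z}|\le\mathsf{opt}+2$'' cannot be reached this way; the lemma, as its proof and its use in the PTAS show, is really the membership bound $\mathsf{memb}(S',\mathcal{Z})\le\mathsf{memb}(S',\mathcal{Z}')+2$ for every set cover $\mathcal{Z}'\subseteq\mathcal{H}$, and your argument does not establish that either, because your $r$ counts all edges of the complement polygon rather than the halfplanes through a given point of $S'$. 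In addition, the two load-bearing steps you defer --- deducing from $1$-stability that the $G_i$ inherit the cyclic angular order of $e_1,\dots,e_r$, and the irreducibility needed to invoke Fact~\ref{fact-2propH}(ii) on the trimmed family --- are only announced, not proved.

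The paper's proof is pointwise and avoids your obstruction entirely. Fix $p\in S'$ and let $\mathcal{Z}(p)=\{H_1,\dots,H_r\}$ be the halfplanes of $\mathcal{Z}$ containing $p$, ordered by Fact~\ref{fact-2propR}(ii) so that $0<\mathsf{ang}(H_1,H_2)<\cdots<\mathsf{ang}(H_1,H_r)<\pi$; here $r=\mathsf{memb}(p,\mathcal{Z})$, not $|\mathcal{Z}|$. Let $S_0\subseteq S$ be the points covered by $\bigcup_{i=2}^{r-1}H_i$ and by no other member of $\mathcal{Z}$. Minimality of $|\mathcal{Z}|$ forces any cover $\mathcal{Z}'$ to contain at least $r-2$ halfplanes meeting $S_0$, and the heart of the proof is to show, using $1$-stability together with Facts~\ref{fact-2propR} and~\ref{fact-2propH} (if some $H'\in\mathcal{Z}'$ hitting $S_0$ missed $p$, one could swap $H_1$ or $H_r$ for $H'$ and strictly enlarge $\bigcup\mathcal{Z}$), that every such halfplane contains $p$. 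This gives $\mathsf{memb}(p,\mathcal{Z}')\ge\mathsf{memb}(p,\mathcal{Z})-2$ with $p$ prescribed in advance, so no point of $S'$ ever has to be located inside anything --- which is precisely the guarantee your approach cannot supply.
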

\begin{proof}
Consider a point $p \in S'$.
We show that $\mathsf{memb}(p,\mathcal{Z}') \geq \mathsf{memb}(p,\mathcal{Z})-2$ for any set cover $\mathcal{Z}' \subseteq \mathcal{H}$ of $S$.
Let $\mathcal{Z}(p) \subseteq \mathcal{Z}$ consist of all halfplanes in $\mathcal{Z}$ that contain $p$.
As $\bigcap \mathcal{Z}(p) \neq \emptyset$, by (ii) of Fact~\ref{fact-2propR} and the assumption $\bigcup \mathcal{H} \neq \mathbb{R}^2$, we have $\mathcal{Z}(p) = \{H_1,\dots,H_r\}$ such that $0 < \mathsf{ang}(H_1,H_2) < \mathsf{ang}(H_1,H_3) < \cdots < \mathsf{ang}(H_1,H_r) < \pi$.
Let $S_0 \subseteq S$ consist of points contained in $\bigcup_{i=2}^{r-1} H_i$ but not contained in any other halfplanes in $\mathcal{Z}$, and $\mathcal{Z}_0' \subseteq \mathcal{Z}'$ consist of halfplanes that contain at least one point in $S_0$.
Note that $|\mathcal{Z}_0'| \geq r-2$, for otherwise $(\mathcal{Z} \backslash \{H_2,\dots,H_{r-1}\}) \cup \mathcal{Z}_0'$ is a set cover of $S$ of size strictly smaller than $\mathcal{Z}$, which contradicts the fact that $\mathcal{Z}$ is a minimum-size set cover of $S$.
We shall show that every halfplane in $\mathcal{Z}_0'$ contains $p$ and thus
\begin{equation*}
    \mathsf{memb}(p,\mathcal{Z}') \geq \mathsf{memb}(p,\mathcal{Z}_0') = |\mathcal{Z}_0'| \geq r-2 = \mathsf{memb}(p,\mathcal{Z})-2.
\end{equation*}
Consider a halfplane $H' \in \mathcal{Z}_0'$.
We want to show $p \in H'$.
By the construction of $\mathcal{Z}_0'$, $H'$ contains a point $q \in S_0$.
Furthermore, by the construction of $S_0$, $q$ is contained in $\bigcup_{i=2}^{r-1} H_i$ but not contained in any halfplane in $\mathcal{Z} \backslash \{H_2,\dots,H_{r-1}\}$.
In particular, $q \notin H_1$ and $q \notin H_r$, which implies $H' \neq H_1$ and $H' \neq H_r$.
We observe that $\{H_1,H_r,H'\}$ is irreducible.
Clearly, $H'$ is not redundant in $\{H_1,H_r,H'\}$, as it contains $q$ while $H_1$ and $H_r$ do not contain $q$.
%Also, $H_1$ cannot be redundant in $\{H_1,H_r,H'\}$, for otherwise $\mathcal{Z} \prec (\mathcal{Z} \backslash \{H_1\}) \cup \{H'\}$, contradicting the fact that $\mathcal{Z}$ is $1$-stable.
If $H_1$ is redundant in $\{H_1,H_r,H'\}$, then $\mathcal{Z} \preceq (\mathcal{Z} \backslash \{H_1\}) \cup \{H'\}$.
Since $\mathcal{Z}$ is irreducible and $H' \neq H_1$, by (i) of Fact~\ref{fact-2propR}, this implies $\mathcal{Z} \prec (\mathcal{Z} \backslash \{H_1\}) \cup \{H'\}$, which contradicts the fact that $\mathcal{Z}$ is $1$-stable.
So $H_1$ is not redundant in $\{H_1,H_r,H'\}$.
For the same reason, $H_r$ is also not redundant in $\{H_1,H_r,H'\}$.
Thus, $\{H_1,H_r,H'\}$ is irreducible.
%Now consider the clockwise order of the halfplanes $H_1,H_r,H'$ from $H_1$.

%We now sort the halfplanes in $\{H_1,H_r,H'\}$ in clockwise order from $H_1$.
%There are two possible orderings, $H_1,H',H_r$ and $H_1,H_r,H'$.
In what follows, we complete the proof by showing that either $p \in H'$ or $\mathcal{Z}$ is $1$-expandable.
As the latter is false (for $\mathcal{Z}$ is $1$-stable), this implies $p \in H'$.
%We consider two cases, $\mathsf{ang}(H_1,H') \leq \mathsf{ang}(H_1,H_r)$ and $\mathsf{ang}(H_1,H') > \mathsf{ang}(H_1,H_r)$.
If $\mathsf{ang}(H_1,H') < \mathsf{ang}(H_1,H_r)$, by the irreducibility of $\{H_1,H_r,H'\}$ and (ii) of Fact~\ref{fact-2propH}, we have $H_1 \cap H' \cap H_r = H_1 \cap H_r$, which implies $H_1 \cap H_r \subseteq H'$ and thus $p \in H_1 \cap H_r \subseteq H'$.
If $\mathsf{ang}(H_1,H') = \mathsf{ang}(H_1,H_r)$, then either $H' \subseteq H_r$ or $H_r \subseteq H'$.
Note that the former is not true as $q \in H'$ but $q \notin H_r$.
Thus, we have $p \in H_r \subseteq H'$.
It suffices to consider the case $\mathsf{ang}(H_1,H') > \mathsf{ang}(H_1,H_r)$.
In this case, we show that $\mathcal{Z}$ is $1$-expandable.
Since $q \in \bigcup_{i=2}^{r-1} H_i$, there exists $H \in \{H_2,\dots,H_{r-1}\}$ which contains $q$.
%Now $H_1,H,H_r,H'$ are in clockwise order.
Now $\mathsf{ang}(H_1,H) < \mathsf{ang}(H_1,H_r) < \mathsf{ang}(H_1,H')$, which implies $\mathsf{ang}(H,H_r) < \mathsf{ang}(H,H')$ and $\mathsf{ang}(H',H_1) < \mathsf{ang}(H',H)$.
We further distinguish two cases, $\mathsf{ang}(H,H') \leq \pi$ and $\mathsf{ang}(H,H') \geq \pi$ (which are in fact symmetric).
Assume $\mathsf{ang}(H,H') \leq \pi$.
Figure~\ref{fig-4half} shows the situation of the points $p,q$ and the halfplanes $H,H',H_1,H_r$ this case.
As $\mathsf{ang}(H,H_r) < \mathsf{ang}(H,H')$, by (ii) of Fact~\ref{fact-2propH}, if $\{H,H_r,H'\}$ is irreducible, then $H \cap H_r \cap H' = H \cap H'$.
But $H \cap H_r \cap H' \neq H \cap H'$, because $q \in H \cap H'$ and $q \notin H_r$.
Thus, $\{H,H_r,H'\}$ is reducible.
Note that $H \cup H' \neq \mathbb{R}^2$, since $\bigcup \mathcal{H} \neq \mathbb{R}^2$ by our assumption.
So by (i) of Fact~\ref{fact-2propH}, neither $H$ nor $H'$ is redundant in $\{H,H_r,H'\}$.
It follows that $H_r$ is redundant in $\{H,H_r,H'\}$, because $\{H,H_r,H'\}$ is reducible.
Therefore, $\mathcal{Z} \preceq (\mathcal{Z} \backslash \{H_r\}) \cup \{H'\}$.
Since $\mathcal{Z}$ is irreducible and $H' \neq H_r$, by (i) of Fact~\ref{fact-2propR}, we have $\mathcal{Z} \prec (\mathcal{Z} \backslash \{H_r\}) \cup \{H'\}$, i.e., $\mathcal{Z}$ is $1$-expandable.
%contradicting the fact that $\mathcal{Z}$ is a $1$-stable.
The other case $\mathsf{ang}(H,H') \geq \pi$ is similar.
In this case, $\mathsf{ang}(H',H) \leq \pi$.
Using the fact $\mathsf{ang}(H',H_1) < \mathsf{ang}(H',H)$ and the same argument as above, we can show that $\mathcal{Z} \prec (\mathcal{Z} \backslash \{H_1\}) \cup \{H'\}$, i.e., $\mathcal{Z}$ is $1$-expandable.
%As a result, only the case $\mathsf{ang}(H_1,H') \leq \mathsf{ang}(H_1,H_r)$ can happen and $p \in H'$.
\end{proof}

\begin{figure}[h]
    \centering
    \includegraphics[height=6.5cm]{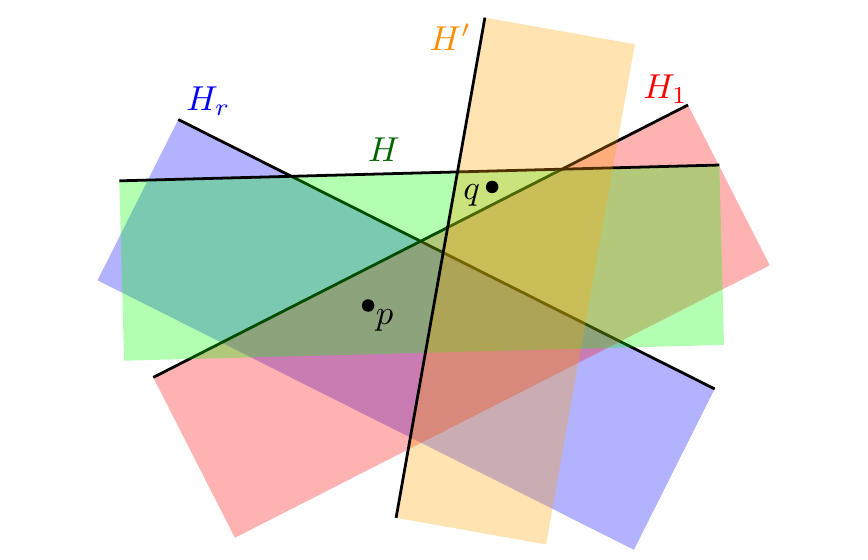}
    \caption{Illustration of the proof of Lemma~\ref{lem-adderr}.}
    \label{fig-4half}
\end{figure}

Using Lemma~\ref{lem-localsearch}, we can compute a $1$-stable minimum-size set cover $\mathcal{Z} \subseteq \mathcal{H}$ of $S$ in $n^{O(1)}$ time.
Then by Lemma~\ref{lem-adderr}, $\mathcal{Z}$ is an approximation solution for the MMGSC instance $(S,S',\mathcal{H})$ with additive error $2$.
This gives us a polynomial-time approximation algorithm for MMGSC with halfplanes with $O(1)$ additive error.

\subsection{Putting everything together}

Our PTAS can be obtained by directly combining the algorithms in Sections~\ref{sec-n^k} and~\ref{sec-adderr}.
Let $c = O(1)$ be the additive error of the algorithm in Section~\ref{sec-adderr}.
We first run the algorithm in Section~\ref{sec-adderr} to obtain a solution $\mathcal{Z} \subseteq \mathcal{H}$.
If $|\mathcal{Z}| \geq \frac{1+\varepsilon}{\varepsilon} \cdot c$, then
\begin{equation*}
    \frac{|\mathcal{Z}|}{\mathsf{opt}(S,S',\mathcal{H})} \leq \frac{|\mathcal{Z}|}{|\mathcal{Z}|-c} \leq 1+\varepsilon.
\end{equation*}
In this case, $\mathcal{Z}$ is already a $(1+\varepsilon)$-approximation solution.
Otherwise, $|\mathcal{Z}| < \frac{1+\varepsilon}{\varepsilon} \cdot c$ and thus $\mathsf{opt}(S,S',\mathcal{H}) < \frac{1+\varepsilon}{\varepsilon} \cdot c$.
We can then run the algorithm in Section~\ref{sec-n^k} to compute an optimal solution in $n^{O(1/\varepsilon)}$ time.
So we conclude the following.
%\begin{theorem}
%    The generalized MMGSC problem with halfplanes admits a PTAS.
%\end{theorem}
\half*

\section{Minimum-ply geometric set cover} \label{sec-MPGSC}
In this section, we give a very simple constant-approximation algorithm for minimum-ply geometric set cover with unit squares.
The technique can be applied to the problem with any similarly-sized geometric objects in $\mathbb{R}^2$.

Let $(S,\mathcal{Q})$ be an MPGSC instance.
As in Section~\ref{sec-unitsq}, we first apply the grid techinique.
We construct a grid $\varGamma$ consisting of square cells of side-length 1.
For each grid cell $\Box$, write $S_\Box = S \cap \Box$ and $\mathcal{Q}_\Box = \{Q \in \mathcal{Q}: Q \cap \Box \neq \emptyset\}$.
The key observation is the following.
\begin{lemma} \label{lem-decomp}
Suppose that, for every $\Box \in \varGamma$, $\mathcal{Q}_\Box^* \subseteq \mathcal{Q}_\Box$ is a $c$-approximation solution of the minimum-size geometric set cover instance $(S_\Box,\mathcal{Q}_\Box)$.
Then $\bigcup_{\Box \in \varGamma} \mathcal{Q}_\Box^*$ is an $O(c)$-approximation solution of the MPGSC instance $(S,\mathcal{Q})$.
\end{lemma}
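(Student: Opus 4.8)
The plan is to mirror the proof of Lemma~\ref{lem-grid}, replacing the membership bound by a ply bound. First I would observe that $\bigcup_{\Box \in \varGamma} \mathcal{Q}_\Box^*$ is a set cover of $S$: every $p \in S$ lies in some grid cell $\Box$, the set $\mathcal{Q}_\Box^*$ covers $S_\Box \ni p$, so $p$ is covered. Next I would bound the ply of $\bigcup_{\Box \in \varGamma} \mathcal{Q}_\Box^*$. Let $p$ be any point in $\mathbb{R}^2$ realizing the ply, i.e.\ the maximum number of squares of the solution with a common point; say $p$ lies in grid cell $\Box'$. A unit square $Q \in \bigcup_{\Box} \mathcal{Q}_\Box^*$ with $p \in Q$ must come from some $\mathcal{Q}_\Box^*$ with $Q \cap \Box \neq \emptyset$; since $Q$ is a unit square and $\Box$ is a unit cell, $Q$ contains $p \in \Box'$ only if $\Box$ is one of the $9$ cells in the $3\times 3$ block centered at $\Box'$ (any further cell is at distance $>1$ from $p$, so no unit square can meet both it and contain $p$). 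Hence it suffices to bound, for each such $\Box$, the number of squares of $\mathcal{Q}_\Box^*$ containing $p$, by the ply of $\mathcal{Q}_\Box^*$.

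The key remaining point is to relate the ply of the $c$-approximate minimum-size set cover $\mathcal{Q}_\Box^*$ to $\mathsf{opt}$ of the MPGSC instance $(S,\mathcal{Q})$. Here I would use the standard fact for similarly sized fat objects (unit squares): since all squares in $\mathcal{Q}_\Box$ meet the unit cell $\Box$, they all lie in a $3\times 3$ block of cells, so any unit square can be stabbed by a bounded number (say $O(1)$, concretely a constant like $4$) of points forming a fixed net; equivalently, the ply of \emph{any} collection of unit squares all meeting $\Box$ is at most a constant times its cardinality. Thus $\mathsf{ply}(\mathcal{Q}_\Box^*) = O(|\mathcal{Q}_\Box^*|) = O(c \cdot |\mathcal{Q}^{\mathrm{opt}}_\Box|)$, where $\mathcal{Q}^{\mathrm{opt}}_\Box$ is a minimum-size set cover of $S_\Box$ using $\mathcal{Q}_\Box$. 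Finally, the restriction to $\Box$ of an optimal MPGSC solution $\mathcal{R}^*$ of $(S,\mathcal{Q})$ covers $S_\Box$, so $|\mathcal{Q}^{\mathrm{opt}}_\Box| \leq |\{R \in \mathcal{R}^* : R \cap \Box \neq \emptyset\}|$, and the latter is itself $O(\mathsf{ply}(\mathcal{R}^*)) = O(\mathsf{opt})$ by the same fat-object packing argument applied to $\mathcal{R}^*$ (all squares of $\mathcal{R}^*$ meeting $\Box$ lie in a $3\times 3$ block and hence their number is $O(1)$ times their ply, which is at most $\mathsf{ply}(\mathcal{R}^*)$). Chaining the inequalities gives $\mathsf{ply}(\mathcal{Q}_\Box^*) = O(c \cdot \mathsf{opt})$, and summing over the $9$ relevant cells yields ply $O(c \cdot \mathsf{opt})$ for $\bigcup_{\Box} \mathcal{Q}_\Box^*$, which is an $O(c)$-approximation.

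I expect the main obstacle to be making the two packing/fatness estimates precise and clearly stated: namely that (a) for a family of unit squares all intersecting a fixed unit cell, cardinality is $O(1)$ times ply, and (b) a minimum-size set cover of $S_\Box$ using $\mathcal{Q}_\Box$ has size $O(1)$ times the ply of any cover of $S_\Box$ — in particular of the localized optimal MPGSC solution. Both follow from the observation that such squares all live in a $3\times 3$ block of unit cells, so a constant number of witness points (e.g.\ the cell corners, $16$ of them, or fewer by a sharper argument) stab all of them, and by pigeonhole one witness is contained in $\Omega(1/16)$ fraction of any such family, giving the ply lower bound; I would isolate this as a short sub-claim before the main argument. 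Everything else is the same constant-blowup bookkeeping as in Lemma~\ref{lem-grid}.
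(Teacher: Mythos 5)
Your proposal is correct and follows essentially the same route as the paper's proof: localize the ply-realizing point to a $3\times 3$ block of cells, use the per-cell size $c$-approximation, and relate the size of any cover of $S_\Box$ to $\mathsf{opt}$ via the observation that every unit square meeting a unit cell contains one of its four corners, giving the same $O(c)$ (indeed $\approx 36c$) bound. The only difference is bookkeeping: you chain the inequalities per cell and sum over the nine cells, while the paper picks the single cell with the largest $|\mathcal{Q}_\Box^*|$ and turns the same corner-pigeonhole argument into a lower bound on the optimal ply.
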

\begin{proof}
Let $\gamma = \mathsf{ply}(\bigcup_{\Box \in \varGamma} \mathcal{Q}_\Box^*)$ and $p \in \mathbb{R}^2$ be a point contained in $\gamma$ unit squares in $\bigcup_{\Box \in \varGamma} \mathcal{Q}_\Box^*$.
Consider the grid cell $\Box_p$ containing $p$ and define $\mathcal{C}$ as the set of $3 \times 3$ grid cells centered at $\Box_p$.
Note that all unit squares containing $p$ belong to $\bigcup_{\Box \in \mathcal{C}} \mathcal{Q}_\Box^*$.
So we have $|\bigcup_{\Box \in \mathcal{C}} \mathcal{Q}_\Box^*| \geq \gamma$ and $|\max_{\Box \in \mathcal{C}} \mathcal{Q}_\Box^*| \geq \gamma/9$.
Therefore, there exists $\Box \in \varGamma$ such that $|\mathcal{Q}_\Box^*| \geq \gamma/9$.
As $\mathcal{Q}_\Box^*$ is a $c$-approximation solution of the minimum-size set cover instance $(S_\Box,\mathcal{Q}_\Box)$, we know that any subset of $\mathcal{Q}_\Box$ that covers $S_\Box$ has size at least $\gamma/(9c)$.
It follows that any subset of $\mathcal{Q}$ that covers $S$ must include at least $\gamma/(9c)$ unit squares in $\mathcal{Q}_\Box$.
Note that each of these unit squares contains a corner of $\Box$.
Thus, at least one corner of $\Box$ is contained in $\gamma/(36c)$ such unit squares, which implies that the ply of any solution is at least $\gamma/(36c)$.
As a result, $\bigcup_{\Box \in \varGamma} \mathcal{Q}_\Box^*$ is an $O(c)$-approximation solution of the MPGSC instance $(S,\mathcal{Q})$.
\end{proof}

Note that the argument in the above proof can be extended to any similarly-size fat objects in any fixed dimension.
Here a set of geometric objects are \textit{similarly-size fat objects} if there exist constants $\alpha,\beta > 0$ such that every object in the set contains a ball of radius $\alpha$ and is contained in a ball of radius $\beta$.

\begin{theorem} \label{thm-reduction}
    For any class $\mathcal{C}$ of similarly sized fat objects in $\mathbb{R}^d$, if the minimum-size geometric set cover problem with $\mathcal{C}$ admits a constant-approximation algorithm with running time $T(n)$ for a function $T$ satisfying $T(a+b) \geq T(a)+T(b)$, then the MPGSC problem with $\mathcal{C}$ also admits a constant-approximation algorithm with running time $T(n)$.
\end{theorem}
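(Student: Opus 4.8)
The plan is to run the assumed minimum-size set cover algorithm cell by cell on a constant-sized grid, exactly as in Lemma~\ref{lem-decomp}, and then to combine the per-cell running times using the superadditivity of $T$. Concretely, I would first lay down a grid $\varGamma$ of axis-parallel $d$-cubes of side length $s$, where $s$ is a constant chosen as a large enough multiple of $\beta$ that every object of $\mathcal{Q}$ meets only $O(1)$ cells and, for every point $p\in\mathbb{R}^d$, only $O(1)$ cells $\Box$ admit an object of $\mathcal{Q}_\Box=\{Q\in\mathcal{Q}:Q\cap\Box\neq\emptyset\}$ through $p$. For each cell $\Box$ with $S_\Box=S\cap\Box\neq\emptyset$ I would call the hypothesized constant-approximation algorithm on the local minimum-size set cover instance $(S_\Box,\mathcal{Q}_\Box)$ to obtain $\mathcal{Q}_\Box^*$ (empty cells contribute nothing), and output $\bigcup_{\Box}\mathcal{Q}_\Box^*$.

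Next I would establish the $O(1)$-approximation guarantee by the fat-object version of Lemma~\ref{lem-decomp} already sketched after its proof. The union covers $S$ since every $p\in S$ is covered by $\mathcal{Q}_{\Box}^*$ for its own cell. For the ply, let $\gamma=\mathsf{ply}(\bigcup_\Box\mathcal{Q}_\Box^*)$ and pick $p$ lying in $\gamma$ output objects; all of them lie in $\bigcup_{\Box\in\mathcal{C}}\mathcal{Q}_\Box^*$ for an $O(1)$-size family $\mathcal{C}$ of cells near $p$, so some $\Box\in\mathcal{C}$ has $|\mathcal{Q}_\Box^*|=\Omega(\gamma)$ and hence the minimum-size set cover of $(S_\Box,\mathcal{Q}_\Box)$ has size $\Omega(\gamma/c)$. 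Any MPGSC solution, intersected with the objects meeting $\Box$, still covers $S_\Box$, so it uses $\Omega(\gamma/c)$ objects each meeting $\Box$; each such object has diameter at most $2\beta$ and contains a ball of radius $\alpha$, hence contains one of a fixed set of $O(1)$ net points of the $2\beta$-neighborhood of $\Box$, so by pigeonhole one net point lies in $\Omega(\gamma/c)$ of them. Thus every solution has ply $\Omega(\gamma/c)$, i.e. the output is an $O(c)$-approximation.

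For the running time, at most $|S|$ cells are non-empty, the local instance at cell $\Box$ has size $n_\Box=|S_\Box|+|\mathcal{Q}_\Box|$, and $\sum_\Box|S_\Box|=|S|$ while $\sum_\Box|\mathcal{Q}_\Box|=O(|\mathcal{Q}|)$ because each object meets only $O(1)$ cells; hence $\sum_\Box n_\Box=O(n)$. Applying $T(a+b)\ge T(a)+T(b)$ repeatedly gives $\sum_\Box T(n_\Box)\le T\!\left(\sum_\Box n_\Box\right)=T(O(n))$, and building $\varGamma$ together with the buckets $\mathcal{Q}_\Box$ takes only near-linear time, so the whole algorithm runs in $T(O(n))$ time, which is the claimed $T(n)$ bound up to the constant inside $T$ (and equals $O(T(n))$ for the polynomial and near-linear functions $T$ of interest).

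The point requiring care is precisely this running-time accounting: because objects straddle cell boundaries, $\sum_\Box n_\Box$ is only $O(n)$ rather than exactly $n$, and the superadditivity hypothesis is exactly what lets us collapse $\sum_\Box T(n_\Box)$ into a single evaluation of $T$; beyond that, one only needs to check that the net size and $|\mathcal{C}|$ --- the constants hidden in the approximation ratio --- depend on nothing but $\alpha,\beta,d$, which is immediate from fatness.
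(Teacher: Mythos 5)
Your proposal is correct and follows essentially the same route as the paper: decompose by a constant-size grid, run the assumed minimum-size set cover algorithm per cell, bound the ply via the Lemma~\ref{lem-decomp}-style pigeonhole argument, and combine running times using superadditivity of $T$. Your net-point pigeonhole for general fat objects (replacing the cell-corner argument for unit squares) and the explicit $\sum_\Box n_\Box = O(n)$ accounting simply spell out details the paper leaves implicit, so there is nothing substantively different to compare.
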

\begin{proof}
Let $(S,\mathcal{R})$ be an MPGSC instance where $\mathcal{R} \subseteq \mathcal{C}$.
We use the above grid technique to decompose the input instance $(S,\mathcal{R})$ into a set $\{(S_\Box,\mathcal{R}_\Box)\}$ of instances.
Then apply the algorithm for minimum-size geometric set cover problem with $\mathcal{C}$ to compute constant-approximation (with respect to size) solutions $\mathcal{R}_\Box^* \subseteq \mathcal{R}_\Box$.
By Lemma~\ref{lem-decomp}, $\bigcup_{\Box \in \varGamma} \mathcal{R}_\Box^*$ is a constant-approximation solution of the MPGSC instance $(S,\mathcal{R})$.
If the algorithm for minimum-size set cover runs in $T(n)$ time, then our algorithm also takes $T(n)$ time, as long as the function $T$ satisfies $T(a+b) \geq T(a)+T(b)$.
\end{proof}

%\begin{corollary} \label{cor-ply}
%   The MPGSC problem with unit squares or unit disks admits constant-approximation algorithms with running time $\widetilde{O}(n)$.
%\end{corollary}
\ply*
\begin{proof}
The $\widetilde{O}(n)$-time constant-approximation algorithms for minimum-size set cover with similarly sized squares/disks are well-known.
For similarly sized squares, see for example~\cite{agarwal2022dynamic}.
For similarly sized disks, see for example~\cite{agarwal2014near,chan2020faster}.
Applying Theorem~\ref{thm-reduction} directly yields $\widetilde{O}(n)$-time constant-approximation algorithms for MPGSC with unit squares and unit disks.
\end{proof}

\section{Conclusion} \label{sec-conclusion}
In this paper, we revisited natural variants of the classical geometric set cover problem, the \textit{minimum-membership geometric set cover} (MMGSC) problem and the \textit{minimum-ply geometric set cover} (MPGSC).
We presented a polynomial-time constant-approximation algorithm for MMGSC with unit squares, a PTAS for MMGSC with halfplanes, and a simple constant-approximation algorithm for MPGSC with unit squares (or unit disks) that runs in near-linear time.
Our algorithms is to establish connections between MMGSC/MPGSC and the standard minimum-size geometric set cover, which might be of independent interest.

Next, we give some open questions for future study on this topic.
First, our algorithm for MMGSC with unit squares heavily relies on the geometry of \textit{unit squares}, and thus cannot be generalized to other settings such as similarly sized squares or unit disks (while our algorithm for MPGSC can).
So it is interesting to consider the problem with these ranges.
Second, it is still not clear whether the MMGSC problem with halfplanes is NP-hard or not.
Thus, an important question is to settle the complexity of this problem.
Finally, the classical geometric set cover problem has been recently studied in dynamic settings \cite{agarwal2022dynamic,chan2022more,chan2022dynamic,khan2023online}, where points/ranges can be inserted and deleted.
Naturally, one can also study the dynamic version of MMGSC and MPGSC.

\section*{Acknowledgements}
The authors would like to thank Qizheng He, Daniel Lokshtanov, Rahul Saladi, Subhash Suri, and Haitao Wang for helpful discussions about the problems, and thank the anonymous reviewers for their detailed comments, which help significantly improve the writing of the paper.

\bibliographystyle{plainurl}
\bibliography{my_bib}

\end{document}